\documentclass[10pt,twocolumn,twoside]{IEEEtran}

\renewcommand{\baselinestretch}{0.97}

\usepackage{multicol, multirow}
\usepackage{array}
\usepackage{amsmath, amsthm, amsfonts, amssymb}
\usepackage{cite}
\usepackage{graphicx}
\usepackage{subfigure}
\usepackage{color}
\usepackage{cases}
\usepackage[]{footmisc}
\usepackage{calc}
\usepackage{physics}
\usepackage{booktabs}
\usepackage{placeins}
\usepackage{xcolor}
\usepackage{hhline}
\usepackage{kotex}
\usepackage{soul}
\usepackage{dblfloatfix}
\usepackage{algorithm,algorithmic}
\usepackage[normalem]{ulem} 

\newcommand{\Conv}{%
  \mathop{\scalebox{1.5}{\raisebox{-0.2ex}{$\circledast$}}
  }
}

\newcommand {\blue} {\color{blue}}

\ifCLASSOPTIONonecolumn
   \renewcommand{\baselinestretch}{2.0}  
   \newcommand{\figwidth}{0.6\columnwidth}
   
   \newcommand{\widefigwidth}{0.5\columnwidth} 
 \else
   \newcommand{\figwidth}{0.95\columnwidth}  
   
   \newcommand{\widefigwidth}{0.98\columnwidth}   
\fi

\newcommand{\ignoreIt}[1] {}

\allowdisplaybreaks 
\newtheorem{theorem}{Theorem}
\newtheorem{corollary}{Corollary}
\newtheorem{lemma}{Lemma}
\newtheorem{prop}{Proposition}
\newtheorem{remark}{Remark}

\DeclareMathOperator*{\argmax}{argmax}

\begin{document}
\author{Hyeonsik~Yeom and Jinyoung~Lee\\
\thanks{
H. Yeom is with the Defense ICT Convergence Research Section, Electronics and Telecommunications Research Institute (ETRI), Daejeon 34129, South Korea, email: hyeonsik@etri.re.kr.

J. Lee is with the Department of Electronics and Electrical Information Engineering, National Korea Maritime \& Ocean University, Busan 49112, South Korea, email: haetsal120@gmail.com.
 }
}

\title{Covert Communication with Spatially Heterogeneous User Cooperation Against a Geometry-Aware Warden}
\ifCLASSOPTIONonecolumn
  \renewcommand{\baselinestretch}{1.5}  
\fi

\maketitle

\begin{abstract}
This paper investigates covert communication in which a single covert user is assisted by multiple spatially distributed non-covert users against a geometry-aware warden. Unlike prior works based on receiver-centric activation, homogeneous channels, or averaged geometry, the proposed framework retains the individual user-to-warden large-scale fading coefficients in both cooperative user activation and the warden’s detection analysis. Under the adopted analytical formulation for large-scale user deployments, we establish an on--off activation structure that balances the interference caused at Bob and the interference effectiveness at warden. Based on the resulting aggregate-interference statistics, we derive a closed-form approximation of the detection error probability, the detection threshold that minimizes it, and the minimum number of cooperative users required to satisfy the covert constraint. Then, we reduce the original high-dimensional power-control problem to a one-dimensional piecewise search and develop a low-complexity algorithm that avoids resolution-dependent grid search. The analysis shows that greater warden-side spatial dispersion can reduce the required cooperative users, whereas a larger average user-to-warden distance increases it. Numerical results validate the analysis and show that finite-sample detection approaches the large-sample benchmark, while imperfect user-to-Bob channel estimation mainly reduces the achievable rate through activation errors.
\end{abstract}

\begin{IEEEkeywords}
Covert communication, physical layer security, piecewise optimization, spatial heterogeneity, wireless communications
\end{IEEEkeywords}

\ifCLASSOPTIONonecolumn
  \clearpage
  \pagenumbering{arabic}
\fi

\IEEEpeerreviewmaketitle
\section{Introduction} 

With the explosive proliferation of wireless devices and the Internet of Things (IoT), modern wireless networks are required to manage enormous volumes of sensitive information. At the same time, the continuous growth in wireless connectivity and device density increases the exposure of communication channels, making the wireless medium highly vulnerable to adversarial surveillance and eavesdropping. {\blue To address these threats, physical layer security (PLS) has emerged as a complementary paradigm to cryptography, exploiting the intrinsic characteristics of wireless channels at the physical layer to provide confidentiality and resilience directly at the signal level \cite{Jiang24Physical, Cao26Reliable, Cao26Self}}. Among the various PLS techniques, \textit{covert communication} has recently attracted considerable attention for its unique objective. Unlike conventional cryptographic approaches that protect the content of a message, covert communication aims to conceal the existence of a transmission from adversarial detectors, typically called a warden. In other words, the transmitter seeks to ensure an extremely low probability that detectors can correctly decide whether a transmission is occurring, while simultaneously guaranteeing a desired data rate for the legitimate receiver. Such concealment capability is essential for diverse mission-critical and privacy-sensitive applications, including military operations, secure IoT networks, and covert ad hoc communications \cite{Jiang24Physical}.

Early covert transmission techniques, including direct sequence spread spectrum, frequency hopping, and chirp spread spectrum, were primarily implementation-driven, motivating theoretical studies on the fundamental limits of covert communication. In \cite[Theorem~1]{Bash13Limits}, Bash \textit{et al.} introduced the \textit{square root law} for additive white Gaussian noise (AWGN) channels. The law states that, under the classical AWGN model without additional uncertainty at a warden, no more than $\mathcal{O}(\sqrt{n})$ bits can be transmitted reliably and covertly within $n$ channel uses. Subsequent works \cite{Che13Reliable,Bloch16Covert,Wang16Fundamental,Arumugam16Keyless,Arumugam19Covert,Tan19Time,Cho21Treating} generalized this principle to binary symmetric, discrete memoryless, multiple-access, broadcast, and interference channels, thus establishing the theoretical foundation of covert communication. 

Under this classical model, the achievable transmission rate approaches zero as the number of channel uses increases. Motivated by this limitation, numerous studies have explored ways to achieve a positive covert rate by introducing uncertainty at a warden. Major approaches include exploiting channel uncertainty~\cite{Lee15Achieving,Shahzad21Covert}, inducing interference via relays or jammers~\cite{Wang19Covert,Sun21Covert,Sun23Covertness,Bai22OnCovert,Sobers17Covert,Li20Optimal}, and utilizing full-duplex receivers~\cite{Shahzad18Achieving,Hu19Covert}. 
Recent studies have also investigated the inherent directionality of mmWave systems \cite{Zhang22Multi}, the spatial reconfigurability of movable antennas \cite{Mao25Sum}, and the mobility-enabled adaptability of UAV networks \cite{Jiang21Resource,Lei24Trajectory,Xu25Collaborative} to enhance covertness.

Covert communication has also been studied in multi-user networks, particularly with respect to random user locations, interference, and spatial topology. He \textit{et al.} \cite{He18Covert} analyzed a homogeneous Poisson point process network in which numerous transmitters act as uncoordinated interferers. They showed that, in the interference-limited regime, the covert rate approaches a finite bound as the interference level increases, indicating that background interference does not provide unbounded covert-rate gains. Soltani \textit{et al.} \cite{Soltani18Covert} selected the friendly node closest to a warden to generate artificial noise, thereby increasing the warden’s detection uncertainty. Although these studies provide valuable insights into the statistical behavior of large-scale stochastic networks, they consider either uncoordinated ambient interference or individually selected friendly nodes, rather than coordinated and jointly optimized cooperation among multiple users.

More recent studies focus on covert communication assisted by multiple cooperative nodes, where legitimate users deliberately transmit interference signals to confuse a warden \cite{Zheng21Wireless,Lee23Multi,Lee24Channel,Yeom24Covert}. Zheng \textit{et al.} \cite{Zheng21Wireless} proposed a distributed cooperative jamming scheme in which friendly jammers are activated according to their instantaneous channel conditions toward the legitimate receiver, and jointly optimized the selection threshold and transmission rate to maximize the covert throughput. Although the individual helper-to-receiver and helper-to-warden distances are modeled, the activation rule depends only on the helper-to-receiver channel condition, and thus the warden-side geometry is not used to select the helpers that are effective in confusing the warden.
Building on the benefits of multi-user cooperation, Lee \textit{et al.}~\cite{Lee23Multi} further developed a joint power-control design for Alice and the cooperative users. In particular, they derived closed-form expressions that explicitly characterize the relationships among system parameters and proposed a low-complexity optimization algorithm. Later, \cite{Lee24Channel} incorporated channel correlation and proposed a scalable $Q$-learning-based algorithm. However, the frameworks in \cite{Lee23Multi,Lee24Channel} assume homogeneous large-scale channel conditions and therefore do not distinguish the cooperative users according to their individual user-to-warden large-scale fading conditions.
Yeom \textit{et al.} \cite{Yeom24Covert} further considered users located at unequal distances from Bob in the system model. However, for tractable analysis, the distinct user-to-Bob and user-to-warden channel variances were replaced by common average parameters, which prevents the resulting analytical design from retaining the individual spatial characteristics of the cooperative users. {\color{blue} These limitations are important because cooperative users that cause similar interference at Bob may provide substantially different interference at the warden. Consequently, a receiver-centric or averaged-geometry design may select ineffective users or inaccurately characterize the warden's aggregate-interference statistics, which can lead to an inaccurate estimate of the minimum detection error probability (DEP) and the minimum number of cooperative users. This may activate more cooperative users than necessary, increase the interference at Bob, and reduce the achievable covert rate. }

Motivated by these limitations, this work investigates covert communication in which a single covert user, Alice, is assisted by multiple spatially distributed non-covert users against a geometry-aware warden. 
Unlike the previous receiver-centric, homogeneous, and averaged-geometry frameworks, {\blue the proposed design retains the individual user-to-warden large-scale fading coefficients in both cooperative-user activation and warden's detection analysis, which enables the proposed framework to treat spatial heterogeneity as a design factor rather than merely as a deployment detail.} {\blue Accordingly, the main challenge lies not in simply substituting heterogeneous channel parameters into an existing model. Rather, it lies in developing a unified analytical framework in which each user’s warden-side geometry is incorporated into cooperative user activation, the resulting selected cooperative user set is used to characterize the warden’s aggregate-interference statistics and minimum DEP, and these analytical relationships are subsequently exploited to determine the minimum number of cooperative users and reformulate the original high-dimensional power-control problem.}
The major contributions are summarized as follows:
{\blue \begin{enumerate}
    \item \textbf{Spatial-heterogeneity-aware cooperative design:}
    Under the adopted analytical formulation for large-scale user deployments, we establish an on--off activation structure that jointly considers the interference caused at Bob and the interference effectiveness at the warden. Thus, the proposed design directly incorporates the individual warden-side geometry into cooperative-user selection. The same activation structure can also be implemented using estimated user-to-Bob channels, enabling the impact of channel estimation errors to be examined.
    \item \textbf{Detection and cooperation analysis under spatial heterogeneity:}
    We characterize the warden’s aggregate interference by retaining the individual large-scale fading conditions of the activated users and reflecting how the activation rule changes the composition of the cooperative user set. Based on this characterization, we derive a closed-form approximation of the DEP, the detection threshold that minimizes the DEP, and the minimum number of cooperative users required to satisfy the covert constraint. We further provide a compact approximation and recover the homogeneous deployment as a special case.
    \item \textbf{Low-complexity optimization algorithm:}
    Based on the derived relationships among Alice’s transmit power, the minimum number of cooperative users, and the activation threshold, we reduce the original high-dimensional problem to a one-dimensional piecewise search. The proposed algorithm evaluates only the analytically relevant candidates, with computational complexity dominated by a single sorting operation, and avoids the resolution dependence of conventional grid-based search.
    \item \textbf{Design insights under spatial heterogeneity:}
    The analysis shows that covertness depends on the balance between Alice’s received-energy shift and the uncertainty of the aggregate cooperative interference. Greater warden-side spatial dispersion reduces the minimum number of cooperative users, whereas a larger average user-to-warden distance increases it. Homogeneous or averaged-geometry analyses may therefore overestimate the cooperation requirement and cause unnecessary interference at Bob. The finite-sample DEP approaches the large-sample benchmark as the number of observations increases, while imperfect user-to-Bob channel estimation mainly reduces the achievable rate through cooperative user activation errors.
\end{enumerate}
}

The remainder of this paper is organized as follows. Section~\ref{Sec:System} presents the system model and problem formulation. Sections~\ref{Sec:On-OffScheme}--\ref{Sec:Optimization} establish the on--off activation structure, analyze the warden's detection performance, and develop the piecewise search algorithm, respectively. Section~\ref{Sec:Simulation} provides numerical validation, and Section~\ref{Sec:conclusion} concludes the paper.

\section{System Model and Problem Formulation} \label{Sec:System}

\subsection{System Model} \label{Subsec:SystemModel}

We consider a covert communication network where a single covert user, Alice, is assisted by multiple non-covert cooperative users. The network consists of Alice, \(M\) non-covert users, a base station Bob, and a warden Willie, as illustrated in Fig.~\ref{fig:System Model}. Alice transmits the covert signal over the covert carrier \(f_\mathrm a\). Each non-covert user \(U_m\) normally communicates with Bob over its own orthogonal carrier \(f_m\). A subset of non-covert users is selected as cooperative users to assist in hiding Alice’s signal from detection at Willie by transmitting interference over $f_\mathrm a$.

{\blue In this work, we consider a large-scale wireless network comprising a large number of static or slowly moving nodes, as commonly encountered in IoT deployments. Owing to the low mobility of the nodes, the channel coherence interval is assumed to be sufficiently long to accommodate the transmission of a long codeword. Furthermore, a large number of nodes are assumed to participate in the wireless communication process, i.e., $M \gg 1$. The channel coefficients are modeled as circularly symmetric complex Gaussian random variables, and hence, the corresponding channel magnitudes follow a Rayleigh distribution.} The channel from node $x$ to node $y$ at carrier frequency $z$ is denoted by 
\begin{equation}
 h_{x,y}^z \sim \mathcal{CN}(0,\lambda_{x,y}), \quad x,y \in \{ \mathrm a,\mathrm b,m,\mathrm w \}, z \in \{ \mathrm a,m \}   
\end{equation}
where $\lambda_{x,y}$ indicates the large-scale fading coefficient, and the indices $\mathrm a,\mathrm b,m$, and $\mathrm w$ represent Alice, Bob, $U_m$, and Willie, respectively. Now, for covert communication, we focus on the frequency band $f_\mathrm a$, and we omit the superscript. We assume a single-slope path-loss model, and thus, the large-scale fading coefficient is expressed as $\lambda_{x,y} = \beta_0 d_{x,y}^{-\alpha}$, where $\beta_0$ is the reference channel gain, $\alpha$ is the path-loss exponent, and $d_{x,y}$ denotes the Euclidean distances between nodes $x$ and $y$. The spatial heterogeneity considered in this work is captured by the user-dependent large-scale fading coefficients \(\{\lambda_{m,\mathrm b}\}_{m=1}^M\) and \(\{\lambda_{m,\mathrm w}\}_{m=1}^M\), which characterize the large-scale strengths of the cooperative links toward Bob and Willie, respectively.

The channel estimation process consists of two phases. First, Bob broadcasts pilot signals to all users, from which each node estimates the channel from Bob to itself. Second, each non-covert user transmits a secret orthogonal pilot signal over $f_\mathrm a$ \cite{Zheng21Wireless, Xu19Pilot}. The use of secret orthogonal pilot signals by each non-covert user enables Bob to estimate the channel between the non-covert user and Bob, whereas Willie cannot. The estimated channel at Bob is modeled as 
\begin{equation}\label{eq:est_error_model}
    \hat h_{m,\mathrm b} = \sqrt{1-\rho_e}\,h_{m,\mathrm b}+\sqrt{\rho_e}\,u_{m,\mathrm b},
\end{equation}
where \(u_{m,\mathrm b}\sim\mathcal{CN}(0,\lambda_{m,\mathrm b})\) is independent of \(h_{m,\mathrm b}\), and \(\rho_e\in[0,1]\) denotes the channel estimation error level.

In addition, $\mathcal{H}_0$ and $\mathcal{H}_1$ indicate the hypotheses that Alice does not transmit or transmits a covert message, respectively. 
The received signal at Bob $y[t]$ is given by
\begin{align} \label{eq:y_Bob}
    y[t]= 
  \begin{cases}
    \displaystyle\sum_{m = 1}^{M} \sqrt{P_m} h_{m,\mathrm b} x_m[t] + n_\mathrm b[t], & \mathcal{H}_0 \\
    \displaystyle\sqrt{P_\mathrm a}h_{\mathrm a,\mathrm b} x_\mathrm a[t] \\ ~~~~+ \sum_{m = 1}^{M} \sqrt{P_m}h_{m,\mathrm b}x_m[t] + n_\mathrm b[t], & \mathcal{H}_1
  \end{cases},
\end{align}
where both $x_\mathrm a[t]$ and $x_m[t]$ are assumed to be complex Gaussian random variables with zero-mean unit-variance. The parameters $P_\mathrm a$ and $P_m$ represent the transmit power of Alice and $U_m$, respectively, and $n_\mathrm b[t] \sim \mathcal{CN}(0, \sigma_\mathrm b^2)$ denotes AWGN at Bob. 
Similarly, the received signal at Willie over $f_\mathrm a$ is given by
\begin{equation} \label{eq:y_Willie}
    z[t]= 
  \begin{cases}
      \displaystyle\sum_{m = 1}^{M} \sqrt{P_m} h_{m,\mathrm w} x_m[t] + n_\mathrm w[t], & \mathcal{H}_0 \\
      \sqrt{P_\mathrm a}h_{\mathrm a, \mathrm w} x_\mathrm a[t] \\ ~~~~+ \sum_{m = 1}^{M} \sqrt{P_m} h_{m,\mathrm w} x_m[t] + n_\mathrm w[t], & \mathcal{H}_1
  \end{cases},
\end{equation}
where $n_\mathrm w[t] \sim \mathcal{CN}(0, \sigma_\mathrm w^2)$ stands for AWGN at Willie.

\begin{figure}[t]
\centering
\includegraphics[width=\widefigwidth]{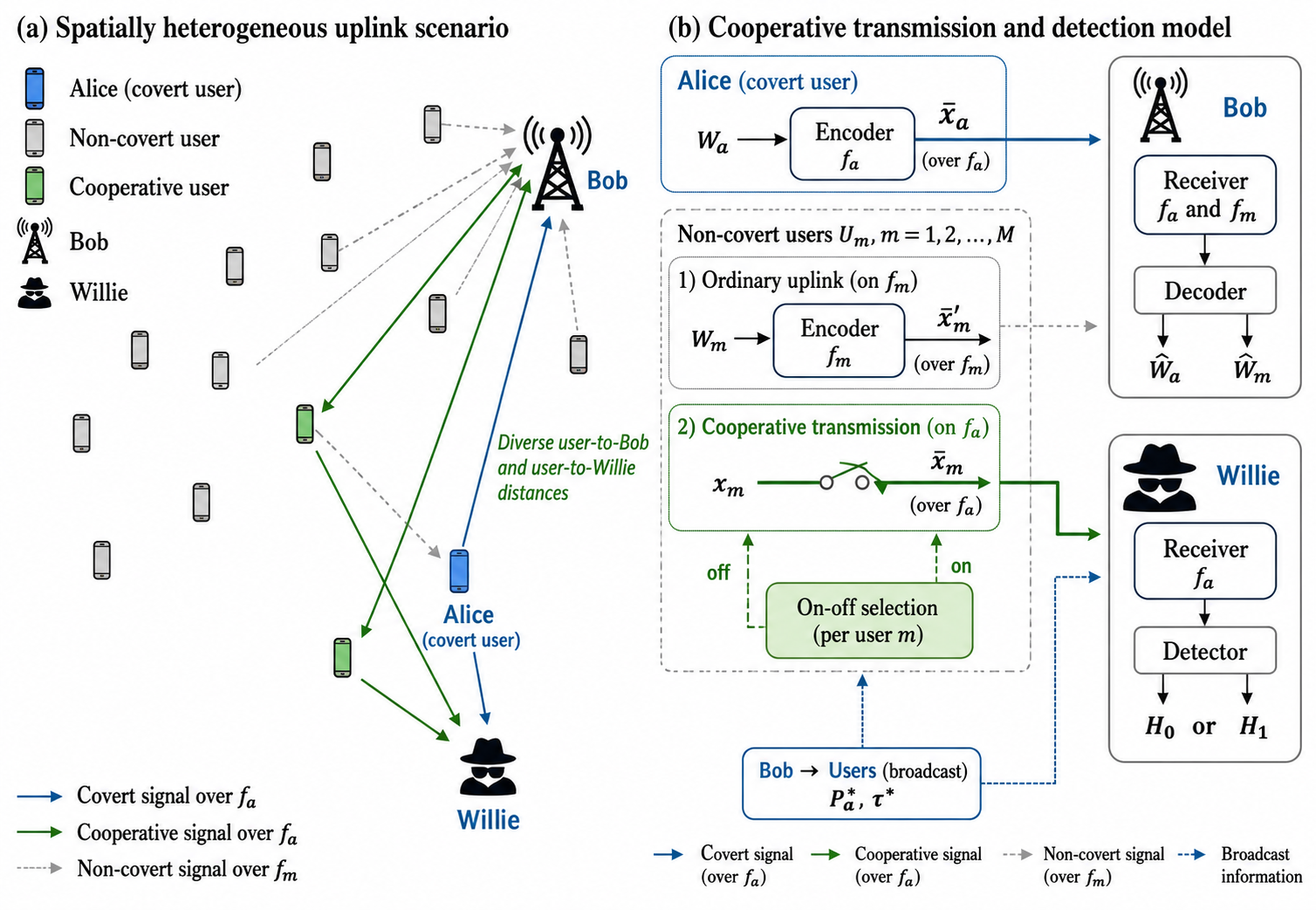}
\caption{Illustration of the proposed system model. (a) Spatially heterogeneous uplink
scenario. (b) Cooperative transmission and detection model.}
\label{fig:System Model} 
\end{figure}

\subsection{Detection at Willie}
 
Willie infers the presence of covert data transmission from Alice based on his observation $\mathbf{z}=[z[1],z[2],\ldots,z[N]]^T$. As introduced in Section \ref{Subsec:SystemModel}, Willie is confronted with a binary hypothesis testing problem consisting of two hypotheses $\mathcal{H}_0$ and $\mathcal{H}_1$. {\blue Let $\mathcal{D}_1$ and $\mathcal{D}_0$ denote Willie's decisions in favor of $\mathcal{H}_1$ and $\mathcal{H}_0$, respectively. For a detection threshold $\gamma$, the false-alarm and miss-detection probabilities are defined as
\[
P_{\rm FA}(\gamma)=\Pr(\mathcal D_1|\mathcal H_0),
\qquad
P_{\rm MD}(\gamma)=\Pr(\mathcal D_0|\mathcal H_1).
\]
} 
Under the conventional equal prior assumption \cite{Bash16Covert, Sobers17Covert}, the DEP can be defined as
\begin{equation}
\zeta \triangleq P_{\mathrm{FA}}(\gamma) + P_{\mathrm{MD}}(\gamma).
\end{equation}
Since Willie chooses the threshold that minimizes the DEP, the covert constraint is imposed on the minimum DEP, i.e., \(\zeta_{\min}\ge 1-\epsilon\), where \(\epsilon>0\) is arbitrarily small~\cite{Hu19Covert,Lee23Multi,Lee24Channel}.

According to the Neyman–Pearson Lemma \cite{Neyman33OntheProblem}, the likelihood ratio test (LRT) is optimal for the considered binary hypothesis test and is expressed as 
\begin{equation}
    \frac{\textstyle\prod_{t=1}^{N} f\bigl(z[t]; \mathcal{H}_1\bigr)}
           {\textstyle\prod_{t=1}^{N} f\bigl(z[t]; \mathcal{H}_0\bigr)}
    \mathop{\gtrless}_{\mathcal{D}_0}^{\mathcal{D}_1} \gamma,
\end{equation}
where $f(\cdot)$ represents the probability density function (PDF) and $N$ denotes the codeword length (number of channel uses). 
For a general cooperative power profile \(\mathbf P=(P_1,\ldots,P_M)\), define the instantaneous received powers at Willie as
\(
\rho_0(\mathbf P) = \sum_{m=1}^{M}P_m|h_{m,\mathrm w}|^2+\sigma_{\mathrm w}^2, 
\) and
\( \rho_1(\mathbf P) = \rho_0(\mathbf P)+P_{\mathrm a}|h_{\mathrm a,\mathrm w}|^2. 
\)
Under the adopted Gaussian signaling model, conditioned on a cooperative power profile \(\mathbf P\), Willie's received samples satisfy $ z[t]\mid \mathcal H_i,\mathbf P \sim \mathcal{CN}\bigl(0,\rho_i(\mathbf P)\bigr)$, for $i\in\{0,1\}$.
{\blue Hence, the conditional likelihood can be written as $p(\mathbf z\mid \mathcal H_i,\mathbf P)
= \bigl(\pi\rho_i(\mathbf P)\bigr)^{-N} \exp\left(-N T_{\mathrm w}/\rho_i(\mathbf P)\right)$, where \(T_{\mathrm w}=\sum_{t=1}^{N}|z[t]|^2 / N\). Therefore, the likelihood depends on \(\mathbf z\) only through \(T_{\mathrm w}\), and \(T_{\mathrm w}\) is a sufficient statistic for Willie's LRT.
Moreover, since $\rho_1(\mathbf P) > \rho_0(\mathbf P)$, the conditional likelihood ratio is increasing monotonically in $T_\mathrm w$.
Consequently, the LRT can be equivalently reduced to a test of the received energy, i.e., \( T_\mathrm w \mathop{\gtrless}_{\mathcal{D}_0}^{\mathcal{D}_1} \gamma. \)  } 
For finite \(N\), conditioned on \(\rho_i(\mathbf P)\) under \(\mathcal H_i\), \(i\in\{0,1\}\), the received-energy statistic satisfies
\begin{equation} \label{eq:finiteN_statistic}
T_{\rm w}\mid \rho_i(\mathbf P) = \rho_i(\mathbf P)\frac{\chi^2_{2N}}{2N}, \qquad i\in\{0,1\}, 
\end{equation} 
where \(\chi^2_{2N}\) denotes a chi-square random variable with \(2N\) degrees of freedom. This expression explicitly captures the finite-sample fluctuation of Willie's energy statistic within a fading block.
Using the test of received energy, the finite-\(N\) DEP for a given detection threshold \(\gamma\) is written as \begin{equation}\label{eq:DEP} 
\zeta_N(\gamma) = \Pr(T_{\rm w}>\gamma|\mathcal H_0) + \Pr(T_{\rm w}\le\gamma|\mathcal H_1). 
\end{equation} 
Willie is assumed to choose the threshold that minimizes the DEP, i.e., \begin{equation} 
\zeta_{N,\min} = \min_{\gamma}\zeta_N(\gamma). 
\end{equation}

{\blue The closed-form analysis in the following sections uses the large-\(N\) benchmark, i.e., $\zeta_{\min}$, where \(\chi^2_{2N}/(2N)\to 1\) almost surely \cite[Sec.~5.5]{Casella02StatInference}. This benchmark gives Willie an asymptotically accurate estimate of the instantaneous received power $\rho_i$ through $T_{\rm w}$ and thus is conservative from the viewpoint of the legitimate users, i.e., Alice and Bob. The finite-\(N\) DEP is separately evaluated in the numerical results to quantify practical finite-sample detection effects.} 
Even for a geometry-aware Willie, the aggregate interference remains random due to small-scale fading and power profile uncertainty, with its statistics shaped by the users' spatial heterogeneity.

\subsection{Optimization Problem}

{\blue This work considers an instantaneous transmission-rate maximization problem under a covert constraint. 
The communication parameters are adapted to the channel realization available at Bob, and hence the rate metric used in this work should be interpreted as an instantaneous achievable transmission rate conditioned on the realized channels.} For a given channel realization and transmit-power profile, the instantaneous achievable transmission rate of Alice is defined as
\begin{equation}\label{eq:rate_def}
R = \log_2\left(1 + 
\frac{P_\mathrm a |h_\mathrm{a,b}|^2}
{\sum_{m=1}^M P_m |h_{m,\mathrm b}|^2 + \sigma_\mathrm b^2}
\right),
\end{equation}
where the denominator includes the aggregate interference caused at Bob by the cooperative transmissions of the non-covert users.
Then, the transmit power optimization problem can be formulated as
\begin{subequations}\label{eq:original_opt}
\begin{align}\label{eq:rate_opt}
&(P_\mathrm a^*, \mathbf{P}^*) = \argmax_{P_\mathrm a, \mathbf{P}} ~ R \\
& ~~~~\text{s.t.} ~~ \zeta_{\min} \geq 1- \epsilon \label{eq:CovertConst}\\
& ~~~~~~~~~ 0 \leq P_\mathrm a \leq P_{\max} \label{eq:PowConst_Alice}\\
& ~~~~~~~~~ 0 \leq P_m \leq P_{\max}, \quad \forall m, \label{eq:PowConst_Noncovert}
\end{align}
\end{subequations}
where $P_{\max}$ is the maximum transmit power and $\epsilon$ is the tolerable covertness error. The covert constraint in \eqref{eq:CovertConst} is imposed through Willie’s minimum DEP, where Willie chooses the detection threshold that minimizes his detection error probability, and the transmit power constraints are given in \eqref{eq:PowConst_Alice} and \eqref{eq:PowConst_Noncovert}, respectively.

{\blue The problem in \eqref{eq:original_opt} is challenging for several reasons. 
First, it jointly optimizes Alice's transmit power and the $M$-dimensional non-covert user power profile, so the number of optimization variables grows linearly with the network size. For the considered networks with hundreds to thousands of users, a direct search over this $(M+1)$-dimensional space is computationally prohibitive. 
Second, the objective function has a fractional SINR structure in which the two sets of variables act in opposite directions. $P_{\mathrm{a}}$ enters the numerator of Bob's SINR, while $\mathbf{P}$ enters its denominator, and the covert constraint couples them in the reverse sense. That is, increasing $P_{\mathrm{a}}$ improves Alice's received signal
power at Bob but makes the covert transmission more distinguishable at Willie, whereas increasing $\mathbf{P}$ improves covertness by increasing the uncertainty of the aggregate interference at Willie but degrades Bob's received SINR. 
Third, the covert constraint is not a simple convex power or SINR constraint. It is a statistical constraint determined after Willie optimizes his detection threshold. Finally, under spatial heterogeneity, the aggregate cooperative interference at Willie depends on the selected users' individual large-scale fading coefficients $\{\lambda_{m,\mathrm{w}}\}_{m=1}^{M}$.} 
These challenges motivate the structural analysis of power profile for the cooperative user and Willie’s detection performance in the following sections.

\section{On--Off Cooperative User Activation} \label{Sec:On-OffScheme}

In this section, we analyze the cooperative user activation structure for a given Alice transmit power $P_\mathrm a$. 
The purpose is to reduce the $M$-dimensional non-covert-user power profile $\mathbf P=(P_1,\ldots,P_M)$ in \eqref{eq:original_opt} to a more tractable and simpler form. 
The structural result derived in this section is established under the adopted large-$M$ analytical formulation for the fixed $P_\mathrm a$ problem. 
Then, the original optimization problem is reformulated in terms of Alice’s transmit power $P_\mathrm a$ and the activation threshold $\tau$.
For notational convenience, define the activation metric of the $m$-th non-covert user as
\begin{equation}\label{eq:activation_metric}
    r_m \triangleq \frac{|h_{m,\mathrm b}|^2}{\lambda_{m,\mathrm w}},
    \quad m=1,\ldots,M.
\end{equation}
{\blue A smaller value of $r_m$ indicates that the user causes relatively weak interference at Bob while providing a relatively strong interference contribution at Willie.}

\begin{prop}\label{prop:power_profile}
{\blue Under the adopted large-$M$ analytical formulation, for a given $P_\mathrm a$, the cooperative-user power profile that minimizes Bob-side interference while satisfying the covert constraint has the following on--off activation structure}, i.e., 
\begin{equation} \label{eq:onoff_prop}
       P_m^*= 
  \begin{cases}
      P_{\max}, & r_m \leq \tau \\
      0 , & \text{otherwise}
  \end{cases},
\end{equation}
{\blue where $\tau$ denotes the activation threshold.}
\end{prop}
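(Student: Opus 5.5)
The plan is to turn the fixed-$P_\mathrm a$ problem into a fractional-knapsack-type linear program in $\mathbf P$ and read off the on--off structure from a Lagrangian/exchange argument. For fixed $P_\mathrm a$, maximizing $R$ in \eqref{eq:rate_def} is equivalent to minimizing the Bob-side interference $I_\mathrm b(\mathbf P)=\sum_m P_m|h_{m,\mathrm b}|^2$ subject to \eqref{eq:CovertConst} and \eqref{eq:PowConst_Noncovert}. The first step is to rewrite the covert constraint as a single scalar constraint that is linear in $\mathbf P$ with coefficients $\lambda_{m,\mathrm w}$. In the large-$N$ benchmark Willie observes $\rho_i(\mathbf P)$ exactly, so $\zeta_{\min}$ depends only on the distribution of the aggregate interference $I_\mathrm w=\sum_m P_m|h_{m,\mathrm w}|^2$ relative to the shift $P_\mathrm a|h_{\mathrm a,\mathrm w}|^2$. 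Under the large-$M$ formulation, with the CLT applied to the independent exponential terms, $I_\mathrm w$ is approximately Gaussian with mean $\sum_m P_m\lambda_{m,\mathrm w}$ and variance $\sum_m P_m^2\lambda_{m,\mathrm w}^2$. Because $h_{m,\mathrm w}$ is independent of $h_{m,\mathrm b}$, these moments are unaffected by the activation decision itself.

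The second step is to argue that, for fixed $P_\mathrm a$, $\zeta_{\min}$ is nondecreasing in the interference uncertainty, so that \eqref{eq:CovertConst} is equivalent to requiring the relevant interference scale to exceed a level $c(P_\mathrm a,\epsilon)$. I would use the adopted formulation in which the binding quantity is the aggregate effective interference $\sum_m P_m\lambda_{m,\mathrm w}$: both the mean and the spread scale with it, and the relative shift $P_\mathrm a|h_{\mathrm a,\mathrm w}|^2$ versus the spread decreases monotonically. This yields the reduced problem
\begin{equation*}
\min_{\mathbf P}\ \sum_{m=1}^M P_m|h_{m,\mathrm b}|^2\quad\text{s.t.}\quad \sum_{m=1}^M P_m\lambda_{m,\mathrm w}\ge c,\quad 0\le P_m\le P_{\max}.
\end{equation*}

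The third step solves this linear program. Substitute $x_m=P_m\lambda_{m,\mathrm w}$, so the cost becomes $\sum_m r_m x_m$ with $r_m$ as in \eqref{eq:activation_metric}. This is a fractional knapsack, and its KKT conditions give $P_m=P_{\max}$ when $r_m<\mu$, $P_m=0$ when $r_m>\mu$, where $\mu$ is the multiplier of the covert constraint. Equivalently, an exchange argument shows that moving power from a user with larger $r_m$ to one with smaller $r_m$, keeping $\sum_m P_m\lambda_{m,\mathrm w}$ fixed, never increases $I_\mathrm b$. At most one user lies at the boundary with a fractional power. Since the $r_m$ are continuous random variables, ties occur with probability zero, and for $M\gg1$ that single fractional user is negligible in the large-$M$ formulation, so it can be rounded to on or off. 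Setting $\tau=\mu$ gives \eqref{eq:onoff_prop}.

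The main obstacle is the second step. The true $\zeta_{\min}$ depends on both $\sum_m P_m\lambda_{m,\mathrm w}$ and $\sum_m P_m^2\lambda_{m,\mathrm w}^2$, so it is not obvious that it reduces to a linear constraint. I would first try to show that, for a fixed Bob-side cost, pushing powers to the extremes $\{0,P_{\max}\}$ can only increase the variance term, which is convex in $\mathbf P$, while keeping the mean term fixed. Any reduction that preserves the linear constraint then also weakly improves covertness, so the knapsack solution remains feasible for the true constraint. If this cannot be made rigorous, I would state it explicitly as part of the adopted large-$M$ analytical formulation.
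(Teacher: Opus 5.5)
Your steps 1 and 3 match the paper's proof. Step 1 is the Gaussian approximation of $T_{\mathrm w}-\sigma_{\mathrm w}^2$ with mean $\psi=\sum_m\lambda_{m,\mathrm w}P_m$ and variance $\Psi=\sum_m\lambda_{m,\mathrm w}^2P_m^2$. Step 3 is the KKT solution of $\min\sum_m P_m|h_{m,\mathrm b}|^2$ subject to $\psi\ge\delta$ and the box constraints, with $\tau$ equal to the multiplier of the covert constraint. The single boundary user is absorbed into the active set; round it on, not off, or you can lose $\psi\ge\delta$.

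The gap is step 2. It carries the whole proposition, and you only assert it. Your stated reason (``both the mean and the spread scale with $\psi$'') is not the mechanism. Under the Gaussian model the mean cancels once the threshold is optimized. With $P_{\mathrm{FA}}=\alpha$ and $\Delta=\lambda_{\mathrm a,\mathrm w}P_{\mathrm a}$ one gets $\zeta=\alpha+Q\bigl((\Delta-\sqrt{\Psi}\,Q^{-1}(\alpha))/\sqrt{\Psi+\Delta^2}\bigr)$. So $\zeta_{\min}$ depends on $\mathbf P$ only through $\Psi$, and the paper argues from this form that it is increasing in $\Psi$. The covert constraint is therefore a lower bound on $\Psi$ alone, not a joint condition on both moments. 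The paper then turns it into a linear constraint via Sedrakyan's (Cauchy--Schwarz) inequality $\Psi\ge\psi^2/M$, treated as tight for large $M$. This gives $\psi\ge\delta$ with $\delta=\sqrt{M}\,\zeta_{\min}^{-1}(1-\epsilon)$. That linearization is what the ``adopted large-$M$ formulation'' refers to, and your proof must invoke it explicitly.

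Your convexity fallback cannot replace that step. Maximizing the convex $\Psi$ over the slice of the box where both the Bob-side cost and $\psi$ are fixed does show that some optimal profile is on--off up to at most two fractional users. It does not say which users are on, however. The claim that the greedy $r_m$ solution stays feasible for the true constraint is false. Take $P_{\max}=1$, one user with $\lambda_{m,\mathrm w}=10$ and $|h_{m,\mathrm b}|^2=10$ ($r_m=1$), and ten users with $\lambda_{m,\mathrm w}=1$ and $|h_{m,\mathrm b}|^2=1/2$ ($r_m=1/2$). Activating the strong user alone gives $\psi=10$ and $\Psi=100$ at Bob-side cost $10$. The greedy solution of $\psi\ge 10$ activates the ten weak users instead, with cost $5$ but $\Psi=10$.

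More generally, on on--off profiles $P_m^2=P_{\max}P_m$. A literal variance constraint is then linear with weights $\lambda_{m,\mathrm w}^2$, and the natural greedy rule would order users by $|h_{m,\mathrm b}|^2/\lambda_{m,\mathrm w}^2$. The metric $r_m=|h_{m,\mathrm b}|^2/\lambda_{m,\mathrm w}$ in the proposition comes precisely from the approximation $\Psi\approx\psi^2/M$. That step must therefore be adopted as the modeling assumption; it cannot be derived from convexity.
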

\begin{proof}
The proof is provided in Appendix \ref{Appendix:power_profile}.
\end{proof}

Based on the on--off scheme presented in Proposition~\ref{prop:power_profile}, the activation threshold $\tau$ determines the set of cooperative users as
\begin{equation}\label{eq:coop_set}
    \mathcal S=\{m:r_m\leq \tau\},
\end{equation}
and the number of cooperative users is given by $K=|\mathcal S|$.
Increasing $\tau$ activates more non-covert users and increases the aggregate interference observed at Willie, which improves the concealment of Alice's covert signal. However, it also increases the interference at Bob and degrades the achievable rate. Therefore, for a given $P_\mathrm a$, the threshold should be selected as the smallest value that satisfies the covert constraint.
After the optimization, Bob broadcasts the control parameters, including $P_\mathrm a^\star$ and $\tau^\star = \tau(P_\mathrm a^\star)$, to the users. 
{\blue The on--off cooperation rule and the broadcast control parameters, including $P_{\mathrm{a}}^{\star}$ and $\tau^{\star}$, are not treated as secret information and can be overheard by Willie without any active participation in the network. 
Since $\lambda_{m,\mathrm{w}}$ is a long-term geometry-dependent quantity, Willie can infer it by passively overhearing the regular uplink transmissions of the non-covert users, where the long-term average received power averages out the small-scale fading. Meanwhile, the instantaneous user-to-Bob channel gains are estimated through the secret orthogonal pilots described in Section~II-A and are thus not observable by Willie. Hence, the exact value of $K$ is granted to Willie as a conservative worst-case assumption.} Nevertheless, the set of cooperative users and the aggregate interference remain random because the activation rule depends on the non-covert user-to-Bob channel, $h_{m,\mathrm b}$. Furthermore, the proposed on--off rule is different from those in existing studies~\cite{Lee23Multi,Lee24Channel,Yeom24Covert}. The existing schemes mainly determine the cooperative users based on the user-to-Bob channel gains. In contrast, the proposed rule uses $r_m$, which jointly accounts for the interference caused at Bob and the interference observed by Willie.
Based on Proposition~\ref{prop:power_profile}, the original optimization problem in \eqref{eq:original_opt} can be reformulated as
\begin{subequations} \label{eq:reformul_opt}
\begin{align}
  \left(P_\mathrm a^\star, \tau^\star \right)
  &= \argmax_{P_\mathrm a,\tau} \quad R \\
  \text{s.t.} \quad
  & \zeta_{\min} \geq 1-\epsilon, \\
  & 0 \leq P_\mathrm a \leq P_{\max}.
\end{align}
\end{subequations}
Note that the reformulated problem in \eqref{eq:reformul_opt} has only two optimization variables, whereas the original problem in \eqref{eq:original_opt} involves $M+1$ optimization variables. Consequently, owing to Proposition~\ref{prop:power_profile}, a much simpler optimization problem can be obtained.

{\blue
\begin{remark}
When Bob has imperfect estimates of non-covert user-to-Bob channels, the same on--off structure can be implemented using the estimated activation metric, $\hat r_m=|\hat h_{m,\mathrm b}|^2/\lambda_{m,\mathrm w}$. Then, Bob applies the on--off rule by replacing $r_m$ with $\hat r_m$. This preserves the on--off cooperative user activation structure. However, the channel estimation error can make $\hat{\mathcal S}=\{m:\hat r_m\leq\tau\}$ different from $\mathcal S =\{m:r_m\leq\tau\}$. If some users satisfy $\hat r_m\leq\tau$ but $r_m>\tau$, they are activated although they are unfavorable under the true channel, thereby increasing the actual interference at Bob and reducing the achievable rate.
\end{remark}
}
 
The next section analyzes Willie's detection performance by accounting for the interference statistics induced by the activation rule.

\section{Analysis of Detection at Willie} \label{Sec:DEP_Willie}

In this section, we derive a closed-form expression of the minimum DEP and analyze its characteristics. In particular, we determine the optimal detection threshold $\gamma^*$ that minimizes the DEP, resulting in the closed-form expression of $\zeta_{\min}$. Furthermore, based on the closed-form expression of $\zeta_{\min}$, we derive the minimum number of cooperative users and the corresponding activation threshold required to satisfy the covert constraint, i.e.,
$\zeta_{\min} \ge 1-\epsilon$, both in closed-form expressions.

According to the on--off scheme in Proposition \ref{prop:power_profile}, the cooperative users transmit interference signals with $P_m = P_{\rm{max}}$, while the other users do not transmit any interference signals, i.e., $P_m = 0$. Then, under the on--off scheme in Proposition \ref{prop:power_profile}, the test statistic, $T_\mathrm w$ is expressed as 
\begin{align}\label{eq:test_statistic2}
   T_\mathrm w = 
  \begin{cases}
    \sum_{i=1}^{K} P_{\max} |h_{m_i,\mathrm w}|^2 + \sigma_\mathrm w^2, & \mathcal{H}_0, \\[4pt]
    \sum_{i=1}^{K} P_{\max} |h_{m_i,\mathrm w}|^2 + P_\mathrm a |h_{\mathrm a,\mathrm w}|^2 + \sigma_\mathrm w^2, & \mathcal{H}_1,
  \end{cases}
\end{align}
where $m_i \in \{1,2,\ldots, M\}$ denotes the index of the $i$-th cooperative user sorted in ascending order of the ratio between the channel gain and the large-scale fading coefficient, i.e., $r_{m_i} \ge r_{m_j}$ for $i \ge j$. It is noteworthy that the selected indices $\{m_i\}_{i=1}^{K}$ are determined by this ordering rule and are not generally equivalent to a uniformly random subset of size $K$.
Conditioned on the selected indices $\{m_i\}_{i=1}^{K}$, since each channel gain $|h_{m_i,\mathrm w}|^2$ follows a Gamma distribution, and the sum of independent random variables is represented by the convolution of their distributions, the shifted test statistic $T_\mathrm w-\sigma_\mathrm w^2$ is expressed as
\begin{align}\label{eq:test_statistic_RV}
  T_\mathrm w - \sigma_\mathrm w^2 \sim
   \begin{cases}
      \Conv_{i=1}^K \Gamma(1, \lambda_{m_i,\mathrm w} P_{\max}), & \mathcal{H}_0, \\[4pt]
      \Conv_{i=1}^K \Gamma(1, \lambda_{m_i,\mathrm w} P_{\max}) \ast \Gamma(1, \lambda_\mathrm {a,w}P_\mathrm a), & \mathcal{H}_1,
   \end{cases}
\end{align}
where $\Gamma(a,b)$ represents a Gamma distribution with shape parameter $a$ and scale parameter $b$, $\ast$ denotes the convolution operator between PDFs of two random variables, and $\Conv_{i=1}^K$ indicates the successive convolution of the PDFs of $K$ random variables, i.e., $\Conv_{k=1}^K A_k = A_1 \ast A_2 \ast \cdots \ast A_K$. Based on the probabilistic distribution of $T_\mathrm w - \sigma_\mathrm w^2$ in \eqref{eq:test_statistic_RV}, a closed-form expression for the DEP in \eqref{eq:DEP} is derived in Lemma~\ref{lemma:DEP}.

\begin{lemma}\label{lemma:DEP}
With the on--off scheme, the DEP for a given number of cooperative users, \(K\), is approximated as
\begin{align}\label{eq:DEP_general}
  \zeta(\gamma)
  &\approx
  1-
  \exp\!\left(
    -\frac{2\Delta(\hat{\gamma}-\Xi_K)-\Sigma_K}{2\Delta^2}
  \right) \nonumber \\
  &\hspace{5em}
 \times Q\!\left(
    -\frac{\Delta(\hat{\gamma}-\Xi_K)-\Sigma_K}
    {\sqrt{\Sigma_K}\Delta}
  \right),
\end{align}
where $Q(\cdot)$ denotes the standard Gaussian $Q$-function, 
\(\hat{\gamma}\triangleq\gamma-\sigma_{\mathrm w}^2\),  
\( \Xi_K \triangleq  P_{\max} \sum_{m=1}^{M}   \pi_m^{(K)}\lambda_{m,\mathrm w}\), 
\(\Delta\triangleq P_{\mathrm a}\lambda_{\mathrm a,\mathrm w}\), and
\begin{align}
   \Sigma_K
    &\triangleq
    P_{\max}^{2}
    \Bigg[
    \sum_{m=1}^{M}
    \pi_m^{(K)}\lambda_{m,\mathrm w}^{2}
    +
    \sum_{m=1}^{M}
    \pi_m^{(K)}
    \bigl(1-\pi_m^{(K)}\bigr)
    \lambda_{m,\mathrm w}^{2}
    \nonumber\\
    &\hspace{1em}
    +
    2\sum_{1\le m<n\le M}
    \left(
    \pi_{m,n}^{(K)}
    -
    \pi_m^{(K)}\pi_n^{(K)}
    \right)
    \lambda_{m,\mathrm w}\lambda_{n,\mathrm w}
    \Bigg].
    \label{eq:Sigma_K_selection}
\end{align}
{\blue Here, \(\pi_m^{(K)}\) and \(\pi_{m,n}^{(K)}\) are the first- and second-order inclusion probabilities, respectively, defined in \eqref{eq:first_order_inclusion_probability} and \eqref{eq:second_order_inclusion_probability}.}
\end{lemma}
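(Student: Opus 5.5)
The plan is to work in the large-$N$ benchmark, where $T_{\rm w}$ equals the instantaneous received power $\rho_i$. I would then write the aggregate cooperative interference as a randomly selected sum, approximate it by a Gaussian, and treat the $\mathcal H_1$ statistic as Gaussian plus an independent exponential term. Concretely, let $X_m\triangleq \mathbb 1\{m\in\mathcal S\}$ and define
\[
I \triangleq P_{\max}\sum_{m=1}^{M} X_m |h_{m,\mathrm w}|^2 ,
\]
so that $T_{\rm w}-\sigma_{\rm w}^2 = I$ under $\mathcal H_0$ and $I+E$ under $\mathcal H_1$, where $E\triangleq P_{\rm a}|h_{\rm a,w}|^2\sim\Gamma(1,\Delta)$. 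Since the activation rule in \eqref{eq:coop_set} depends only on $\{h_{m,\mathrm b}\}$, the indicators $\{X_m\}$ are independent of $\{h_{m,\mathrm w}\}$ and of $h_{\rm a,w}$. This independence is what makes the moments tractable even though, as noted after \eqref{eq:test_statistic2}, the selected set is not a uniform random subset.

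\textbf{Step 1 (moments of $I$).} Conditioning on $\{X_m\}$ and using $\mathbb E|h_{m,\mathrm w}|^2=\lambda_{m,\mathrm w}$ gives $\mathbb E[I]=P_{\max}\sum_m \pi_m^{(K)}\lambda_{m,\mathrm w}=\Xi_K$, with $\pi_m^{(K)}=\Pr(X_m=1)$. For the variance I would use the law of total variance, $\mathrm{Var}(I)=\mathbb E[\mathrm{Var}(I\mid X)]+\mathrm{Var}(\mathbb E[I\mid X])$.
\begin{itemize}
\item The first term equals $P_{\max}^2\sum_m \pi_m^{(K)}\lambda_{m,\mathrm w}^2$, because an exponential random variable with mean $\lambda$ has variance $\lambda^2$.
\item The second term is the variance of $P_{\max}\sum_m X_m\lambda_{m,\mathrm w}$. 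Its diagonal contributions are $\pi_m^{(K)}(1-\pi_m^{(K)})\lambda_{m,\mathrm w}^2$. Its cross terms are $2(\pi_{m,n}^{(K)}-\pi_m^{(K)}\pi_n^{(K)})\lambda_{m,\mathrm w}\lambda_{n,\mathrm w}$, using $\mathbb E[X_mX_n]=\pi_{m,n}^{(K)}$.
\end{itemize}
Together these reproduce $\Sigma_K$ in \eqref{eq:Sigma_K_selection}.

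\textbf{Step 2 (Gaussian approximation).} Under the large-$M$ formulation with many active users, I would invoke a central-limit argument to replace $I$ by $\mathcal N(\Xi_K,\Sigma_K)$. This step is the main obstacle, because the summands are not independent: the $X_m$ are coupled through the ordering and threshold, and they are not identically distributed since the $\lambda_{m,\mathrm w}$ are heterogeneous. I would first try to justify it by a Lindeberg/Lyapunov-type condition, noting that no single $\lambda_{m,\mathrm w}$ dominates the sum and that the selection dependence is weak, being of sampling-without-replacement type. Failing a rigorous argument, I would present it explicitly as the approximation behind the ``$\approx$'' in \eqref{eq:DEP_general} and defer validation to the numerical results.

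\textbf{Step 3 (error probabilities).} Write $u=\hat\gamma-\Xi_K$. The false-alarm probability is $P_{\rm FA}=Q\big(u/\sqrt{\Sigma_K}\big)$. For the miss-detection probability, $I+E$ is exponentially modified Gaussian, so I would compute $P_{\rm MD}=\Pr(I+E\le\hat\gamma)$ by conditioning on $E$ and integrating the exponential density against the Gaussian CDF. Completing the square in the exponent gives
\[
P_{\rm MD}=\Phi\!\left(\tfrac{u}{\sqrt{\Sigma_K}}\right)-\exp\!\left(-\tfrac{u}{\Delta}+\tfrac{\Sigma_K}{2\Delta^2}\right)\Phi\!\left(\tfrac{u-\Sigma_K/\Delta}{\sqrt{\Sigma_K}}\right).
\]
Adding $P_{\rm FA}$, the identity $Q(x)+\Phi(x)=1$ collapses the first terms to $1$. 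Rewriting $\Phi(x)=Q(-x)$ and putting the exponent over the common denominator $2\Delta^2$ then yields exactly \eqref{eq:DEP_general}.
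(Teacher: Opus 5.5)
Your proposal is correct and follows essentially the same route as the paper's proof. In both, the mean $\Xi_K$ and variance $\Sigma_K$ come from the law of total expectation and variance over the order-statistics-induced selection, followed by a Gaussian approximation and the exponentially modified Gaussian integral for $P_{\rm MD}$; your observation that the indicators $X_m$ depend only on $\{h_{m,\mathrm b}\}$, and are therefore independent of the Willie-side channels, is exactly what makes that moment averaging valid. The only cosmetic difference is that the paper first applies a moment-matched Gamma-then-Gaussian approximation conditioned on $\mathcal S$ and then averages the moments, whereas you apply the Gaussian approximation directly to the unconditional $I$, which yields the same $\mathcal N(\Xi_K,\Sigma_K)$ and the same final expression.
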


\begin{proof}
See Appendix \ref{Appendix:DEP}
\end{proof}

The analytical form of the DEP in \eqref{eq:DEP_general} enables one to obtain several design parameters. In particular, the detection threshold \(\gamma^\ast\), which minimizes the DEP, is derived in closed form and presented in Lemma~\ref{lemma:gamma}.

\begin{lemma}\label{lemma:gamma}
The detection threshold that minimizes \(\zeta(\gamma)\) is given by
\begin{equation}\label{eq:Opt_gamma}
  \gamma^\ast = \Xi_K+\sigma_{\mathrm w}^2.
\end{equation}
\end{lemma}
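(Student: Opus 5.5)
The plan is to treat $\zeta(\gamma)$ in Lemma~\ref{lemma:DEP} as a smooth function of the centred variable $x\triangleq\hat\gamma-\Xi_K$ and find its stationary point. The structure of \eqref{eq:DEP_general} suggests reading it as
\[
\zeta = \Pr(G>\hat\gamma)+\Pr(G+E\le\hat\gamma),
\]
where $G\sim\mathcal N(\Xi_K,\Sigma_K)$ is the Gaussian (large-$K$) approximation of the aggregate cooperative interference and $E\sim\mathrm{Exp}(\Delta)$ is Alice's contribution. Indeed, $1-\exp(\cdot)Q(\cdot)$ is exactly the complementary CDF form of an exponentially modified Gaussian. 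With this reading, $\partial\zeta/\partial\gamma=f_{G+E}(\hat\gamma)-f_G(\hat\gamma)$, and the optimal threshold is the crossing point of the two densities. I would verify this derivative directly from \eqref{eq:DEP_general} rather than only through the interpretation.

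\textbf{Step 1 (derivative).} Set $u\triangleq(x-\Sigma_K/\Delta)/\sqrt{\Sigma_K}$. Differentiating the product in \eqref{eq:DEP_general} gives
\[
\frac{\partial\zeta}{\partial\gamma}
=\frac{1}{\Delta}e^{-\frac{2\Delta x-\Sigma_K}{2\Delta^2}}\bigl(1-Q(u)\bigr)
-\frac{1}{\sqrt{\Sigma_K}}\,e^{-\frac{2\Delta x-\Sigma_K}{2\Delta^2}}\,\phi(u),
\]
where $\phi$ is the standard Gaussian density. Completing the square, the exponential factor times $\phi(u)$ equals $\phi(x/\sqrt{\Sigma_K})$, which is the Gaussian $\mathcal H_0$ density centred at $\Xi_K$. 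This confirms the density-crossing form.

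\textbf{Step 2 (unimodality).} The first term is the density of $G+E$ and the second is that of $G$. At $x\to-\infty$ the Gaussian tail dominates, so the derivative is negative. At $x\to+\infty$ the exponential tail dominates, so the derivative is positive. Log-concavity of both densities should give a single crossing, so the stationary point is the global minimiser.

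\textbf{Step 3 (locating the crossing; the main obstacle).} The crossing is not exactly at $x=0$ for finite parameters. Expanding for small $\Delta/\sqrt{\Sigma_K}$ shows it sits at $x^\ast=\Delta/2+o(\Delta)$. I would then invoke the covert regime enforced by $\zeta_{\min}\ge1-\epsilon$ with small $\epsilon$. This forces $\Delta\ll\sqrt{\Sigma_K}$, since otherwise the two hypotheses are well separated. In this regime the shift $x^\ast$ is negligible on the scale $\sqrt{\Sigma_K}$ of the interference fluctuation. Moreover, $\zeta$ is flat near its minimum: the error from using $x=0$ is $O((\Delta/\sqrt{\Sigma_K})^2)$ via a second-order Taylor expansion. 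Hence, to the order of accuracy of Lemma~\ref{lemma:DEP}, the optimal threshold is $\hat\gamma^\ast=\Xi_K$, i.e., $\gamma^\ast=\Xi_K+\sigma_{\mathrm w}^2$.

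The delicate part is justifying that this leading-order choice is consistent with the approximation already made in Lemma~\ref{lemma:DEP}. I would handle it by bounding the DEP gap between $x=0$ and $x=x^\ast$ explicitly in terms of $\Delta^2/\Sigma_K$.
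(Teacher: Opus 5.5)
Your argument is sound in substance but takes a different route from the paper. The paper never differentiates \eqref{eq:DEP_general}. It replaces $Q(x)$ by the exponential approximation $c_q e^{-x^2/2}$ and uses the same completing-the-square identity you obtain in Step~1, $a_\gamma-b_\gamma^2/2=-(\hat\gamma-\Xi_K)^2/(2\Sigma_K)$. This gives $\zeta(\gamma)\approx 1-c_q\exp\bigl(-(\hat\gamma-\Xi_K)^2/(2\Sigma_K)\bigr)$, which is symmetric about $\Xi_K$ and so is minimized exactly at $\hat\gamma=\Xi_K$. That derivation takes one line, but the exact centring comes from the $Q$-approximation; as you correctly observe, the true minimizer of \eqref{eq:DEP_general} is shifted. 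Your route works with the exact expression and identifies the unique stationary point as the crossing of the two hypothesis densities. It then justifies $\gamma^\ast=\Xi_K+\sigma_{\mathrm w}^2$ as a leading-order statement in the regime $\Delta\ll\sqrt{\Sigma_K}$, with an explicit bound on the DEP penalty. This is the same regime the paper invokes for the erfc approximation in Lemma~\ref{lemma:Min_DEP}. Your route costs more work, but it quantifies an approximation that the paper's proof leaves implicit.

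Two points need fixing.

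First, the crossing is at $x^\ast\approx\Delta$, not $\Delta/2$. Take $b_\gamma=-u$ as in Appendix~\ref{Appendix:DEP}. The stationarity condition is $Q(b^\ast)/\phi(b^\ast)=\Delta/\sqrt{\Sigma_K}$ with $b^\ast=\sqrt{\Sigma_K}/\Delta-x^\ast/\sqrt{\Sigma_K}$. The Mills-ratio expansion $1/b-1/b^3+\cdots$ then gives $x^\ast=\Delta\bigl(1-\Delta^2/\Sigma_K+\cdots\bigr)$. The value $\Delta/2$ comes from treating Alice's exponential term as a pure shift by its mean; its variance $\Delta^2$ moves the crossing at the same order. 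This only changes constants. In fact, $\zeta(\Xi_K+\sigma_{\mathrm w}^2)-\min_\gamma\zeta(\gamma)\approx(\Delta/\sqrt{\Sigma_K})^3/(2\sqrt{2\pi})$, which is even smaller than your $O(\Delta^2/\Sigma_K)$ bound.

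Second, log-concavity of both densities does not by itself exclude two crossings; two centred Gaussians with different variances cross twice. Instead, use your Step~1: the sign of $\partial\zeta/\partial\gamma$ equals the sign of $Q(b_\gamma)/\phi(b_\gamma)-\Delta/\sqrt{\Sigma_K}$. The Mills ratio is strictly decreasing on $\mathbb R$, from $+\infty$ to $0$, and $b_\gamma$ is strictly decreasing in $\gamma$. Hence there is exactly one sign change, from negative to positive, and the stationary point is the global minimizer.
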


\begin{proof}
Starting from \eqref{eq:DEP_general}, the DEP is approximated as
\begin{equation}
  \zeta (\gamma) \approx
  1 - c_q \exp\left( \underbrace{  -\frac{1}{2\Sigma_K} \left( \hat\gamma^2-2\Xi_K \hat\gamma + \Xi_K^2 \right)}_{(a)}  \right), \label{eq:ApproxDEP}
\end{equation}
where the approximation follows from the exponential approximation of the Gaussian \(Q\)-function, i.e., \(Q(x)\approx c_q e^{-x^2/2}\)~\cite{Lee23Multi}, \cite[Eqs.(4)-(5)]{Chiani03New}. It is observed that \eqref{eq:ApproxDEP} is an exponential function of a concave quadratic term in \(\hat{\gamma}\). Since the term (a) attains its maximum value at \(\hat{\gamma}=\Xi_K\), the DEP is minimized at \(\hat{\gamma}_K^\ast=\Xi_K\). From \(\hat{\gamma}=\gamma-\sigma_{\mathrm w}^2\), the detection threshold is obtained as \eqref{eq:Opt_gamma}.
\end{proof}

Using the detection threshold in \eqref{eq:Opt_gamma}, the minimum DEP is obtained as presented in Lemma \ref{lemma:Min_DEP}.
\begin{lemma}\label{lemma:Min_DEP}
The minimum DEP at Willie is given by
\begin{equation}\label{eq:Opt_DEP}
  \zeta_{\min}
  = 1 - \left( \sqrt{\pi}
  \left(
  \sqrt{\frac{\Sigma_K}{2\Delta^2}}
  + 
  \sqrt{\frac{\Sigma_K}{2\Delta^2} + \frac{4}{\pi}}
  \right) \right)^{-1}.
\end{equation}
\end{lemma}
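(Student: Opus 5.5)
The plan is to evaluate the closed-form DEP approximation of Lemma~\ref{lemma:DEP} at the minimizing threshold of Lemma~\ref{lemma:gamma}. The product $\exp(\cdot)\,Q(\cdot)$ that remains is then replaced by a standard tight closed-form approximation of the Mills ratio. I will use the exact expression \eqref{eq:DEP_general}, not the exponential approximation \eqref{eq:ApproxDEP}. That approximation was only a device to locate $\gamma^\ast$; keeping the exact form here gives the sharper result.

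\textbf{Step 1: substitute the threshold.} By Lemma~\ref{lemma:gamma}, $\hat\gamma^\ast=\gamma^\ast-\sigma_{\mathrm w}^2=\Xi_K$, so $\hat\gamma^\ast-\Xi_K=0$ in \eqref{eq:DEP_general}. The exponent becomes $\Sigma_K/(2\Delta^2)$, and the argument of the $Q$-function becomes $\Sigma_K/(\sqrt{\Sigma_K}\Delta)=\sqrt{\Sigma_K}/\Delta$. Define $x\triangleq\sqrt{\Sigma_K}/\Delta>0$. Then
\begin{equation*}
\zeta_{\min}\approx 1-e^{x^2/2}\,Q(x).
\end{equation*}
This already shows that $\zeta_{\min}$ depends on the system only through the ratio of the aggregate-interference variance $\Sigma_K$ to the squared energy shift $\Delta^2$.

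\textbf{Step 2: approximate the scaled $Q$-function.} The factor $e^{x^2/2}Q(x)$ is the Mills ratio divided by $\sqrt{2\pi}$. I will use the classical two-sided bound of Birnbaum and Sampford type, valid for $x\ge 0$:
\begin{equation*}
\sqrt{\tfrac{2}{\pi}}\frac{1}{x+\sqrt{x^2+4}}\;<\;e^{x^2/2}Q(x)\;\le\;\sqrt{\tfrac{2}{\pi}}\frac{1}{x+\sqrt{x^2+8/\pi}}.
\end{equation*}
The upper expression is exact at $x=0$, where it equals $1/2$, and it is asymptotically exact as $x\to\infty$. I therefore adopt it as the approximation.

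\textbf{Step 3: rewrite in the target form.} Factor $\sqrt2$ out of the denominator:
\begin{equation*}
x+\sqrt{x^2+8/\pi}=\sqrt2\Bigl(\sqrt{x^2/2}+\sqrt{x^2/2+4/\pi}\Bigr).
\end{equation*}
Since $\sqrt{2/\pi}/\sqrt2=1/\sqrt\pi$, substituting $x^2/2=\Sigma_K/(2\Delta^2)$ reproduces \eqref{eq:Opt_DEP} exactly.

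\textbf{Main obstacle.} The algebra is routine. The real issue is justifying the choice of Mills-ratio approximation and being explicit about its direction. Using the upper bound on $e^{x^2/2}Q(x)$ makes the stated $\zeta_{\min}$ a slight underestimate of the approximate minimum DEP, which is conservative for the covert constraint. I would point this out, and note that the relative error of the bound is small, only a few percent, uniformly in $x$. I would also remark that the equality sign in the lemma is understood within the large-$N$, Gaussian-approximation framework of Lemma~\ref{lemma:DEP}.
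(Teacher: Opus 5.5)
Your derivation is the paper's own. You set $\hat\gamma^\ast=\Xi_K$ in \eqref{eq:DEP_general} to get $1-e^{\Sigma_K/(2\Delta^2)}Q(\sqrt{\Sigma_K}/\Delta)$ and apply the upper bound of \cite[Eq.~(7.1.13)]{Abramowitz64Handbook}; the paper writes that bound in $\mathrm{erfc}$ form and you write it in the equivalent Mills-ratio form, so the stated formula is correctly obtained. One caution on your conservativeness remark: the bound makes the formula a lower bound only on $\zeta(\gamma^\ast)$, and $\gamma^\ast$ from Lemma~\ref{lemma:gamma} minimizes the exponential surrogate \eqref{eq:ApproxDEP} rather than \eqref{eq:DEP_general}, whose actual minimizer sits near $\hat\gamma\approx\Xi_K+\Delta$ and (for $\Sigma_K\gg\Delta^2$) raises $1-\zeta$ by a relative $\approx\Delta^2/(2\Sigma_K)$, more than the bound's slack of about $(1-2/\pi)\Delta^2/\Sigma_K$, so relative to Willie's true optimum under \eqref{eq:DEP_general} the formula is slightly optimistic rather than conservative.
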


\begin{proof}
By substituting the optimal detection threshold $\gamma^\ast$ in \eqref{eq:Opt_gamma} into the DEP in \eqref{eq:DEP_general}, we obtain
\begin{align} 
    \zeta_{\min} &= 1-\exp\!\left(\frac{\Sigma_K}{2\Delta^2}\right)Q\!\left(\frac{\sqrt{\Sigma_K}}{\Delta}\right) \nonumber\\
    &  = 1 - \frac{1}{2}\exp\!\left(\frac{\Sigma_K}{2\Delta^2}\right)\mathrm{erfc}\!\left(\frac{\sqrt{\Sigma_K}}{\sqrt{2}\,\Delta}\right) \nonumber \\
    & \mathop{\approx}_{{\mathrm{(a)}}} 1 - \frac{1}{2}\exp \left(\frac{\Sigma_K}{2\Delta^2} \right)\frac{2\exp \left( -\frac{\Sigma_K}{2\Delta^2} \right) }{\sqrt{\pi}\left(\sqrt{\frac{\Sigma_K}{2\Delta^2}} + \sqrt{\frac{\Sigma_K}{2\Delta^2} + \frac{4}{\pi}} \right)} \nonumber \\
    & = 1-\left( \sqrt{\pi}\left(\sqrt{\frac{\Sigma_K}{2\Delta^2}} + \sqrt{\frac{\Sigma_K}{2\Delta^2} + \frac{4}{\pi}} \right) \right) ^{-1} ,
\end{align}
where $\mathrm{erfc}(\cdot)$ indicates the complementary error function. 
{\blue The approximation in (a) follows from the fact that 
$\operatorname{erfc}(x) \approx 2/\sqrt{\pi} \cdot \operatorname{exp}(-x^2)/(x + \sqrt{x^2 + 4/\pi})$ is highly accurate for $x \gtrsim 1$~\cite[Eq.~(7.1.13)]{Abramowitz64Handbook}. In our setting, the argument of the complementary error function is $\sqrt{\Sigma_K }/( \sqrt{2}\Delta)$, and since the aggregate interference uncertainty $\Sigma_K$ from the cooperative users is typically much larger than that of Alice's power $\Delta^2$, this approximation is well justified. }
\end{proof}

Based on Lemma~\ref{lemma:Min_DEP}, the number of cooperative users should be chosen as the smallest integer that satisfies the covert constraint \(\zeta_{\min}\ge 1-\epsilon\). Since \(\zeta_{\min}\) is an increasing function of \(\Sigma_K\), this condition can be equivalently expressed in terms of the required aggregate interference uncertainty. The resulting minimum number of cooperative users is characterized in Theorem~\ref{Theorem:K}.

\begin{theorem}\label{Theorem:K}
Under the DEP approximation in Lemma~\ref{lemma:DEP}, the minimum number of cooperative users required to satisfy the covert constraint is given by
\begin{equation}\label{eq:Kmin_selection}
    K_{\min}
    =
    \min
    \left\{
    K\in\{0,1,\ldots,M\}:
    \Sigma_K \ge c_\epsilon\Delta^2
    \right\},
\end{equation}
where \( c_\epsilon \triangleq \tfrac{1/\epsilon^2 -8+16\epsilon^2}{2\pi}\).
\end{theorem}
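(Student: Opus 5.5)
The plan is to invert the closed-form minimum DEP of Lemma~\ref{lemma:Min_DEP} for each fixed $K$. This turns the covert constraint $\zeta_{\min}\ge 1-\epsilon$ into an equivalent scalar condition on the ratio $\Sigma_K/\Delta^2$. $K_{\min}$ is then simply the smallest admissible $K$ for which that condition holds. The per-$K$ equivalence is all we need: because the statement defines $K_{\min}$ directly as a minimum over $\{0,1,\ldots,M\}$, monotonicity of $\Sigma_K$ in $K$ is not required.

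Fix $K$ and assume $\Delta>0$; if $\Delta=0$, Alice is silent, the constraint holds trivially, and both sides give $K_{\min}=0$. Introduce $x\triangleq\sqrt{\Sigma_K/(2\Delta^2)}\ge 0$ and $A\triangleq 1/(\epsilon\sqrt{\pi})$. By \eqref{eq:Opt_DEP}, the constraint $\zeta_{\min}\ge 1-\epsilon$ reads $\bigl(\sqrt{\pi}(x+\sqrt{x^2+4/\pi})\bigr)^{-1}\le\epsilon$, which is equivalent to
\begin{equation*}
x+\sqrt{x^2+4/\pi}\ \ge\ A .
\end{equation*}
The left-hand side $g(x)=x+\sqrt{x^2+4/\pi}$ is continuous and strictly increasing on $[0,\infty)$, with $g(0)=2/\sqrt{\pi}$. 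Hence the condition is equivalent to $x\ge x^\ast$, where $x^\ast$ is the unique root of $g(x^\ast)=A$, provided $A\ge 2/\sqrt{\pi}$, i.e. $\epsilon\le 1/2$. This is the regime of interest, since $\epsilon$ is arbitrarily small. If instead $\epsilon>1/2$, every $K$ (including $K=0$) satisfies the constraint, consistent with taking the minimum over the full set.

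To compute $x^\ast$, isolate the root and square: $x^2+4/\pi=(A-x)^2$ gives $x^\ast=(A^2-4/\pi)/(2A)$. Since $x^\ast\le A$, the squaring step introduces no spurious root. Then I would substitute $A^2=1/(\pi\epsilon^2)$:
\begin{equation*}
(x^\ast)^2=\frac{(A^2-4/\pi)^2}{4A^2}=\frac{(1/\epsilon^2-4)^2/\pi^2}{4/(\pi\epsilon^2)}=\frac{1/\epsilon^2-8+16\epsilon^2}{4\pi}.
\end{equation*}
Because $x\ge x^\ast\ge 0$ is equivalent to $x^2\ge (x^\ast)^2$, the constraint becomes $\Sigma_K/\Delta^2=2x^2\ge c_\epsilon$, i.e. $\Sigma_K\ge c_\epsilon\Delta^2$. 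Taking the smallest $K$ satisfying this yields \eqref{eq:Kmin_selection}.

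The main obstacle is purely bookkeeping: tracking the sign conditions when squaring and the range restriction $\epsilon\le 1/2$ under which $x^\ast\ge 0$. The algebra itself is routine. All approximation content is inherited from Lemmas~\ref{lemma:DEP}--\ref{lemma:Min_DEP}, which is why the theorem is stated under the DEP approximation.
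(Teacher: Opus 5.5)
Your proof is correct and follows the paper's route: the paper also inverts the Lemma~\ref{lemma:Min_DEP} expression and simply asserts that solving the inequality for $\Sigma_K$ gives $\Sigma_K\ge c_\epsilon\Delta^2$, and your root $x^\ast=(A^2-4/\pi)/(2A)$ together with the check $x^\ast\ge 0$ supplies the algebra the paper omits. One side remark is wrong: for $\epsilon>1/2$ and $\Delta>0$ the formula is not consistent with the constraint, because $c_\epsilon=(1/\epsilon-4\epsilon)^2/(2\pi)>0$ rules out $K=0$ (where $\Sigma_0=0$) even though $K=0$ then satisfies the constraint; the theorem genuinely needs $\epsilon\le 1/2$, since squaring $x\ge x^\ast$ is reversible only when $x^\ast\ge 0$.
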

\begin{proof}

By Lemma~\ref{lemma:Min_DEP}, the covert constraint \(\zeta_{\min}\ge 1-\epsilon\) is equivalent to
\begin{equation}
\left[
  \sqrt{\pi}
  \left(
  \sqrt{\frac{\Sigma_K}{2\Delta^2}}
  +
  \sqrt{\frac{\Sigma_K}{2\Delta^2}+\frac{4}{\pi}}
  \right)
\right]^{-1}
\le \epsilon.
\end{equation}
Solving this inequality with respect to \(\Sigma_K\) yields \(\Sigma_K \ge c_\epsilon\Delta^2\).
Hence, the minimum number of cooperative users is the smallest integer \(K\) satisfying this condition, which gives \eqref{eq:Kmin_selection}.
\end{proof}

To obtain a compact closed-form rule, we approximate the selected users as a uniformly chosen \(K\) subset without replacement, for which the finite population correction applies~\cite[Sec.~2.6]{Cochran77Sampling},\cite[Sec.~2.3]{Lohr19Sampling}. Then, \begin{equation}\label{eq:uniform_inclusion_prob} 
\pi_m^{(K)} \approx \frac{K}{M}, \qquad \pi_{m,n}^{(K)} \approx \frac{K(K-1)}{M(M-1)}. 
\end{equation} 
Substituting \eqref{eq:uniform_inclusion_prob} into \(\Sigma_K\) gives
\begin{equation}\label{eq:Sigma_K_uni}
\Sigma_K^{\rm uni}
=
P_{\max}^{2}
\left[
KE+
K\frac{M-K}{M-1}V
\right],
\end{equation}
where $ E\triangleq \tfrac{1}{M}\textstyle\sum_{m=1}^{M}\lambda_{m,\mathrm w}^{2}$ and
\begin{equation*}
V\triangleq
\tfrac{1}{M}\textstyle\sum_{m=1}^{M}
\left(
\lambda_{m,\mathrm w}
-
\tfrac{1}{M}\textstyle \sum_{n=1}^{M}\lambda_{n,\mathrm w}
\right)^2 .
\end{equation*}
{\blue Let \(K_{\min}^{\rm uni}\) be the minimum \(K\) satisfying
\(\Sigma_K^{\rm uni}\ge c_\epsilon\Delta^2\). This closed-form rule is
conservative for a given topology whenever
\begin{equation}\label{eq:sufficient_condition_uniform}
\Sigma_{K_{\min}^{\rm uni}}
\ge
\Sigma_{K_{\min}^{\rm uni}}^{\rm uni}.
\end{equation}
Indeed, under \eqref{eq:sufficient_condition_uniform},
\begin{equation}\label{eq:sufficient_condition_chain}
\Sigma_{K_{\min}^{\rm uni}}
\ge
\Sigma_{K_{\min}^{\rm uni}}^{\rm uni}
\ge
c_\epsilon\Delta^2,
\end{equation}
and thus \(K_{\min}^{\rm uni}\) also satisfies the general covert constraint in Theorem~\ref{Theorem:K}. Hence, the following corollary should be understood as a sufficient closed-form design rule under the uniform finite-population approximation.}

\begin{corollary}\label{Corollary:K_uniform}
Under the uniform finite-population approximation in \eqref{eq:Sigma_K_uni}, the minimum number of cooperative users can be approximated in closed form as
\begin{equation}\label{eq:K1_exact_approx}
K_{\min}^{\rm uni}
=
\left\lceil
\tfrac{M(E+V)}{2V}
\left(
1-
\sqrt{
1-\tfrac{4V}{CM(E+V)^2}
}
\right)
\right\rceil,
\end{equation}
where \(C \triangleq P_{\max}^{2}/(P_{\mathrm a}^{2}\lambda_{\mathrm a,\mathrm w}^{2}c_\epsilon)\).
\end{corollary}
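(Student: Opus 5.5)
The plan is to solve the scalar inequality \(\Sigma_K^{\rm uni}\ge c_\epsilon\Delta^2\) in closed form, treating \(K\) as a continuous variable, and then take the ceiling. The first step is to substitute \eqref{eq:Sigma_K_uni} and \(\Delta=P_{\mathrm a}\lambda_{\mathrm a,\mathrm w}\), and divide by \(P_{\max}^2\). With \(C=P_{\max}^{2}/(P_{\mathrm a}^{2}\lambda_{\mathrm a,\mathrm w}^{2}c_\epsilon)\), the constraint becomes
\begin{equation*}
KE+K\frac{M-K}{M-1}V\ \ge\ \frac{1}{C}.
\end{equation*}
Under the large-\(M\) formulation adopted throughout, I will replace \(M-1\) by \(M\). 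This yields the quadratic inequality
\begin{equation*}
g(K)\triangleq \frac{V}{M}K^2-(E+V)K+\frac{1}{C}\ \le\ 0 .
\end{equation*}

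Next I will establish monotonicity, so that the feasible set in \(\{0,\ldots,M\}\) is an upper interval and its minimum is governed by the smaller root. The map \(K\mapsto KE+K(M-K)V/M\) is a concave parabola with vertex at \(K_v=M(E+V)/(2V)\). The key observation is that
\begin{equation*}
E=\tfrac1M\textstyle\sum_m\lambda_{m,\mathrm w}^2=V+\bigl(\tfrac1M\textstyle\sum_m\lambda_{m,\mathrm w}\bigr)^2\ \ge\ V,
\end{equation*}
so \(K_v\ge M\). Hence \(\Sigma_K^{\rm uni}\) is nondecreasing on \([0,M]\), and \(K\) is feasible if and only if \(K\) is at least the smaller root of \(g\). (If \(V=0\), the homogeneous case, the constraint is linear and gives \(K\ge 1/(CE)\). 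This agrees with the limit \(V\to0\) of the formula by a first-order expansion of the square root.)

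The smaller root follows from the quadratic formula:
\begin{equation*}
K_-=\frac{(E+V)-\sqrt{(E+V)^2-4V/(MC)}}{2V/M}=\frac{M(E+V)}{2V}\left(1-\sqrt{1-\frac{4V}{CM(E+V)^2}}\right).
\end{equation*}
Taking \(\lceil K_-\rceil\) gives \eqref{eq:K1_exact_approx}.

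The main point needing care is feasibility. Real roots require \(4V\le CM(E+V)^2\), and we also need \(K_-\le M\); otherwise even activating all users cannot satisfy the covert constraint. Both conditions are equivalent to \(\Sigma_M^{\rm uni}\ge c_\epsilon\Delta^2\), because \(g(M)=M(V-E-V)+1/C=-ME+1/C\le 0\) exactly when the constraint holds at \(K=M\). I would state this as the implicit standing assumption under which the corollary is meaningful. I would also remark that the \(M-1\to M\) replacement changes \(K_-\) only by \(O(1/M)\) relative terms, and that the exact version is obtained by substituting \(M-1\) for \(M\) in the coefficient \(V/M\).
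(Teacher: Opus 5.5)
Your proposal is correct and follows the paper's proof. You substitute \eqref{eq:Sigma_K_uni} and the definition of \(C\) to obtain \(K\bigl(E+\frac{M-K}{M-1}V\bigr)\ge 1/C\), solve the quadratic in \(K\), and take the ceiling of the smaller root. Your explicit replacement \(M-1\to M\) is exactly the step the paper uses without stating it to arrive at \eqref{eq:K1_exact_approx}, and your \(E\ge V\) monotonicity and \(\Sigma_M^{\rm uni}\ge c_\epsilon\Delta^2\) feasibility checks supply justification the paper omits. One small correction to your closing remark: undoing that replacement also changes the linear coefficient to \(E+\frac{M}{M-1}V\), not only the quadratic coefficient \(V/M\to V/(M-1)\).
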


\begin{proof}
From Theorem~\ref{Theorem:K}, the covert constraint is satisfied if \( \Sigma_K \ge c_\epsilon\Delta^2\).

By substituting the uniform finite-population approximation in \eqref{eq:Sigma_K_uni} and using \(\Delta=P_{\mathrm a}\lambda_{\mathrm a,\mathrm w}\) together with the definition of \(C\), this condition is rewritten as
\begin{equation}
K\left(
E+\frac{M-K}{M-1}V
\right)
\ge
\frac{1}{C}.
\end{equation}
Solving the corresponding quadratic inequality with respect to \(K\) and taking the smallest integer satisfying the inequality yields \eqref{eq:K1_exact_approx}.
\end{proof}
The expression in Corollary~\ref{Corollary:K_uniform} provides the same physical insight as the original scaling law. Since we assume \(M\gg 1\), applying \(\sqrt{1-x}\approx1-x/2\) to \eqref{eq:K1_exact_approx} yields 
\begin{equation}\label{eq:Kmin_uni_scaling} 
K_{\min}^{\rm uni} \approx \left\lceil \frac{1}{C(E+V)} \right\rceil = \left\lceil \frac{ c_\epsilon P_{\mathrm a}^{2}\lambda_{\mathrm a,\mathrm w}^{2} } { P_{\max}^{2}(E+V) } \right\rceil . 
\end{equation} 
{\blue This scaling shows that the minimum number of cooperative users decreases with the Willie-side second-order spatial statistic \(E+V\), where \(E\) captures the empirical second moment of \(\lambda_{m,\mathrm w}\) and \(V\) captures spatial heterogeneity. Hence, stronger spatial heterogeneity increases the aggregate cooperative-interference uncertainty observed by Willie, making Alice's additional energy easier to hide.} 
Conversely, the minimum number of cooperative users increases quadratically with \(P_{\mathrm a}\lambda_{\mathrm a,\mathrm w}\). In addition, since \(c_\epsilon=\mathcal O(\epsilon^{-2})\) as \(\epsilon\to0\), the minimum number of cooperative users also increases on the order of
\(\epsilon^{-2}\) as the covert constraint becomes tighter. These trends are numerically confirmed in Section~\ref{Sec:Simulation}.

When \(\lambda_{m,\mathrm w}\approx\bar{\lambda}_{\mathrm w}\) for all \(m\), the spatial heterogeneity vanishes, and the selected cooperative set no longer introduces additional uncertainty through large-scale heterogeneous fading from non-covert user-to-Willie. Therefore, \(\Sigma_K\approx P_{\max}^{2}K\bar{\lambda}_{\mathrm w}^{2}\), which gives the following homogeneous special case from Theorem~\ref{Theorem:K}.

\begin{corollary}\label{cor:homogeneous}
Under homogeneous user deployment, the minimum number of
cooperative users is approximated as
\begin{equation}\label{eq:Kmin_simplified}
K_{\min}^{\rm hom}
\approx
\left\lceil
\frac{P_{\mathrm a}^{2}c_\epsilon}{P_{\max}^{2}}
\left(
\frac{\lambda_{\mathrm a,\mathrm w}}
{\bar{\lambda}_{\mathrm w}}
\right)^2
\right\rceil .
\end{equation}
\end{corollary}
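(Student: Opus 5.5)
The plan is to specialize Theorem~\ref{Theorem:K} directly to the homogeneous case. In that case the inclusion-probability structure in \eqref{eq:Sigma_K_selection} collapses, and no appeal to the uniform approximation \eqref{eq:uniform_inclusion_prob} is needed.

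First, set $\lambda_{m,\mathrm w}=\bar\lambda_{\mathrm w}$ for all $m$ in \eqref{eq:Sigma_K_selection}. Under the on--off rule, exactly $K$ users are selected, so the inclusion probabilities satisfy
\[
\sum_{m}\pi_m^{(K)}=K
\quad\text{and}\quad
\sum_m\pi_m^{(K)}(1-\pi_m^{(K)})+2\sum_{m<n}\bigl(\pi_{m,n}^{(K)}-\pi_m^{(K)}\pi_n^{(K)}\bigr)=\operatorname{Var}\bigl(|\mathcal S|\bigr)=0 .
\]
The second identity is the variance of the cardinality of the selected set written via indicator covariances, and it vanishes because $|\mathcal S|=K$ is deterministic (granted to Willie). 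Hence
\[
\Sigma_K=P_{\max}^2K\bar\lambda_{\mathrm w}^2 .
\]
This agrees with \eqref{eq:Sigma_K_uni} with $V=0$ and $E=\bar\lambda_{\mathrm w}^2$, which serves as a consistency check. If the homogeneity is only approximate, $\lambda_{m,\mathrm w}\approx\bar\lambda_{\mathrm w}$, the same computation yields the stated relation with ``$\approx$''.

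Next, substitute this expression into the condition of Theorem~\ref{Theorem:K}. The condition $\Sigma_K\ge c_\epsilon\Delta^2$ becomes
\[
K\ge \frac{c_\epsilon P_{\mathrm a}^2\lambda_{\mathrm a,\mathrm w}^2}{P_{\max}^2\bar\lambda_{\mathrm w}^2}.
\]
The left-hand side $\Sigma_K$ is strictly increasing in $K$, so the smallest integer satisfying the condition is the ceiling of the right-hand side, which gives \eqref{eq:Kmin_simplified}. As a cross-check, one can take $V\to0$ in \eqref{eq:Kmin_uni_scaling}, or expand \eqref{eq:K1_exact_approx} as $V\to0$, and recover the same expression.

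The main point requiring care is the vanishing of the covariance terms. I would justify it through the identity $\operatorname{Var}(\sum_m \mathbf 1\{m\in\mathcal S\})=0$ rather than through the specific form of the inclusion probabilities. The only other caveat is the truncation to $K\le M$, which is implicitly assumed, so that the ceiling does not exceed $M$.
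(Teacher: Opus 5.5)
Your proposal is correct and follows the paper's route. Like the paper, you specialize $\Sigma_K$ to $P_{\max}^2K\bar\lambda_{\mathrm w}^2$ under $\lambda_{m,\mathrm w}\approx\bar\lambda_{\mathrm w}$, substitute it into the condition $\Sigma_K\ge c_\epsilon\Delta^2$ of Theorem~\ref{Theorem:K} with $\Delta=P_{\mathrm a}\lambda_{\mathrm a,\mathrm w}$, and take the ceiling; your identity $\operatorname{Var}(|\mathcal S|)=0$ simply makes explicit the vanishing of the selection-covariance terms, which the paper only states informally.
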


\begin{proof}

Under homogeneous deployment, \(\lambda_{m,\mathrm w}\approx\bar{\lambda}_{\mathrm w}\)
for all \(m\), and hence \(\Sigma_K\approx P_{\max}^{2}K\bar{\lambda}_{\mathrm w}^{2}\).
Substituting this expression into the condition
\(\Sigma_K\ge c_\epsilon\Delta^2\) in Theorem~\ref{Theorem:K}, with
\(\Delta=P_{\mathrm a}\lambda_{\mathrm a,\mathrm w}\), gives
\begin{equation}
    K
    \ge
    \frac{
    c_\epsilon P_{\mathrm a}^{2}\lambda_{\mathrm a,\mathrm w}^{2}
    }
    {
    P_{\max}^{2}\bar{\lambda}_{\mathrm w}^{2}
    }.
\end{equation}
Applying the integer constraint yields \eqref{eq:Kmin_simplified}.
\end{proof}

Equation~\eqref{eq:Kmin_simplified} shows that, in the homogeneous case, the minimum number of cooperative users is determined by the relative Willie-side channel between Alice and the cooperative users. In addition, when $\lambda_{a,w} = \bar{\lambda}_w$, this result directly reduces to the result shown in previous works \cite{Lee23Multi, Yeom24Covert}, demonstrating that the proposed analysis encompasses these prior studies as special cases and therefore provides a more general characterization of the system.

By utilizing the minimum number of cooperative users obtained from Theorem~\ref{Theorem:K}, or its closed-form approximation in Corollary~\ref{Corollary:K_uniform}, the corresponding activation threshold of the on--off scheme in Proposition~\ref{prop:power_profile} can be readily determined, as presented in Theorem~\ref{theorem:tau}.
\begin{theorem}\label{theorem:tau}
The activation threshold is determined by 
\begin{equation}
   \tau
    =
    \frac{|h_{m_{K_{\min}},\mathrm b}|^{2}}
    {\lambda_{m_{K_{\min}},\mathrm w}} .
\end{equation} 
\end{theorem}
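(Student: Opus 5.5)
The plan is to combine the on--off structure of Proposition~\ref{prop:power_profile} with the cardinality requirement of Theorem~\ref{Theorem:K}. The users are ordered by their activation metric, so that $r_{m_1}\le r_{m_2}\le\cdots\le r_{m_M}$. Under this ordering, the set $\mathcal S=\{m:r_m\le\tau\}$ in \eqref{eq:coop_set} is always a prefix $\{m_1,\ldots,m_K\}$. Its cardinality $K(\tau)=|\mathcal S|$ is a nondecreasing, right-continuous step function of $\tau$, which jumps by one at each order statistic $r_{m_k}$ (ties occur with probability zero, since the $|h_{m,\mathrm b}|^2$ are continuous). Hence $K(\tau)\ge K_{\min}$ holds if and only if $\tau\ge r_{m_{K_{\min}}}$.

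Next I will show that the rate in \eqref{eq:rate_def} is nonincreasing in $\tau$ for fixed $P_\mathrm a$. Under the on--off rule, the interference at Bob is $P_{\max}\sum_{i=1}^{K(\tau)}|h_{m_i,\mathrm b}|^2$, a sum of nonnegative terms over a set that grows with $\tau$. The SINR therefore does not increase as $\tau$ grows. On the other side, by Lemma~\ref{lemma:Min_DEP} and Theorem~\ref{Theorem:K}, the covert constraint $\zeta_{\min}\ge1-\epsilon$ is equivalent to $\Sigma_K\ge c_\epsilon\Delta^2$. By definition, $K_{\min}$ is the smallest $K$ meeting this condition.

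Putting these together, the feasible thresholds for a given $P_\mathrm a$ are those with $K(\tau)\ge K_{\min}$, provided the constraint stays satisfied for all $K\ge K_{\min}$. The smallest feasible threshold, which maximizes $R$, is then $\tau=r_{m_{K_{\min}}}=|h_{m_{K_{\min}},\mathrm b}|^2/\lambda_{m_{K_{\min}},\mathrm w}$. Any $\tau$ in the interval $[r_{m_{K_{\min}}},r_{m_{K_{\min}+1}})$ selects the same set and gives the same rate, so this choice is canonical. This is exactly the rule stated after \eqref{eq:coop_set}: take the smallest threshold that meets the covert constraint.

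The main obstacle is the monotonicity of $\Sigma_K$ in $K$, which is needed so that feasibility persists once $K\ge K_{\min}$. The general expression \eqref{eq:Sigma_K_selection} involves inclusion probabilities whose behaviour in $K$ is not obvious. I would first argue it directly: conditioned on the ordering, adding one more independent, nonnegative exponential interference term $P_{\max}|h_{m_{K+1},\mathrm w}|^2$ cannot decrease the variance of the aggregate. Failing that, I would invoke the uniform approximation \eqref{eq:Sigma_K_uni}, whose increments $E+V(M-2K-1)/(M-1)$ are positive because $E\ge V$. If neither argument is clean, I would simply define $\tau$ through the minimal feasible $K$, exactly as Theorem~\ref{Theorem:K} does, and skip the persistence claim.
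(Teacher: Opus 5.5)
Your first paragraph is the paper's proof. With the users sorted by $r_m$, setting $\tau=r_{m_{K_{\min}}}$ makes the $K_{\min}$-th smallest metric the activation boundary, so exactly $K_{\min}$ users are active. The paper leaves optimality to the remark after \eqref{eq:coop_set}, and your rate-monotonicity step usefully formalizes it.

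However, the ``main obstacle'' you identify is not needed. Every $\tau<r_{m_{K_{\min}}}$ gives $K(\tau)<K_{\min}$, which is infeasible by the minimality in \eqref{eq:Kmin_selection}. Every $\tau\ge r_{m_{K_{\min}}}$, feasible or not, gives $K(\tau)\ge K_{\min}$ and hence no larger rate. So $\tau=r_{m_{K_{\min}}}$ is optimal whether or not feasibility persists. Persistence would only be needed to describe the whole feasible set as a half-line.

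In any case, do not rely on your first monotonicity argument, because it is wrong. Conditioning on the selection only shows that the conditional variance $P_{\max}^2A_K$ grows. But $\Sigma_K$ also contains the selection-induced term $P_{\max}^2\mathrm{Var}(L_K)$, and this term is not monotone in $K$; it vanishes at $K=M$. The reason is that $\lambda_{m_{K+1},\mathrm w}$ is not independent of $L_K$ under selection without replacement. Under the uniform approximation their covariance is $-KV/(M-1)$.

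Your uniform-approximation fallback is correct. Its increments are at least $P_{\max}^2(E-V)=P_{\max}^2\big(\tfrac1M\sum_m\lambda_{m,\mathrm w}\big)^2>0$.

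Finally, the formula presumes $K_{\min}\ge1$, since $m_0$ is undefined. For $K_{\min}=0$ one takes $\tau=0$, as Algorithm~\ref{alg:piecewise} does.
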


\begin{proof}
The on--off scheme activates user \(m\) if \(|h_{m,\mathrm b}|^2/\lambda_{m,\mathrm w}\le\tau\). Therefore, setting \(\tau\) to the activation metric of user \(m_{K_{\min}}\) makes the \(K_{\min}\)-th smallest metric the activation boundary, ensuring that exactly \(K_{\min}\) cooperative users participate.
\end{proof}

\section{Optimization of System Parameters}\label{Sec:Optimization}

In this section, we first reformulate the optimization problem in \eqref{eq:reformul_opt}, which involves two optimization variables, into a simpler form with only a single optimization variable based on Theorems~\ref{Theorem:K} and~\ref{theorem:tau}. Then, we propose a structure-aware piecewise search algorithm that efficiently solves the reformulated optimization problem.

\subsection{Reformulation of the Optimization Problem}

As discussed in Section \ref{Sec:On-OffScheme}, adopting the on–off scheme in Proposition \ref{prop:power_profile} allows the original problem in \eqref{eq:original_opt} to be reformulated as \eqref{eq:reformul_opt}, reducing the number of optimization variables 
from $M+1$ to two,
$P_{\mathrm a}$ and $\tau$. For a given $P_\mathrm a$, Theorem \ref{Theorem:K} characterizes the minimum number of cooperative users through the condition on \(\Sigma_K\), and Theorem \ref{theorem:tau} determines the corresponding activation threshold. Let \(m_i\) denote the index of the user with the \(i\)-th smallest activation metric \(|h_{m,\mathrm b}|^2/\lambda_{m,\mathrm w}\). Then, under the adopted large-$M$ analytical formulation, the problem in \eqref{eq:reformul_opt} can be reduced to
\begin{equation} \label{eq:master_opt} 
P_{\mathrm a}^{*} = \argmax_{0 \le P_{\mathrm a} \le P_{\max}} R(P_{\mathrm a},K_{\min}(P_{\mathrm a})), 
\end{equation} 
where 
\begin{equation} 
R(P_{\mathrm a},K) = \log_2\left( 1+ \frac{ P_{\mathrm a}|h_{\mathrm a,\mathrm b}|^2 } { \sum_{i=1}^{K}P_{\max}|h_{m_i,\mathrm b}|^2 +\sigma_{\mathrm b}^2 } \right). 
\label{eq:Rate_OnOff} 
\end{equation}
This optimization problem has only one optimization variable, i.e., \(P_{\mathrm a}\), and is therefore much easier to solve than the original formulation of \eqref{eq:original_opt}, which involves \(M+1\) optimization variables.

\subsection{Optimization of Transmit Power for Alice}
Notably, as shown in Theorem~\ref{Theorem:K} and Corollaries~\ref{Corollary:K_uniform} and~\ref{cor:homogeneous}, \(K_{\min}\) is a function of \(P_{\mathrm a}\). In particular, the minimum number of cooperative users increases with \(P_{\mathrm a}\), because a larger \(P_{\mathrm a}\) produces a larger Alice-induced detection shift at Willie and therefore requires stronger cooperative-interference uncertainty to satisfy the covert constraint. Hence, the solution of \eqref{eq:master_opt} is not simply the maximum allowable value \(P_{\max}\). In this work, we determine \(P_{\mathrm a}^{*}\) through a piecewise search for the optimization problem in \eqref{eq:master_opt}.

Meanwhile, it is noteworthy that \(K_{\min}(P_{\mathrm a})\) is an integer-valued function. As a result, \(K_{\min}(P_{\mathrm a})\) remains unchanged over certain ranges of \(P_{\mathrm a}\). 
{\blue Moreover, for a fixed cooperation size $K$, the set of selected cooperative users and the resulting aggregate interference at Bob in the denominator of \eqref{eq:Rate_OnOff} do not vary with $P_{\mathrm{a}}$. Hence, $R(P_{\mathrm{a}},K)$ in \eqref{eq:Rate_OnOff} is strictly increasing in $P_{\mathrm{a}}$ within each interval where $K_{\min}(P_{\mathrm{a}})=K$, and the optimum within each interval is attained at the largest feasible transmit power, i.e., the boundary candidate of that interval.} 
 For a given cooperation size \(K_{\min}=\bar K\), the maximum value of \(P_{\mathrm a}\) that satisfies the covert constraint can be obtained from Theorem~\ref{Theorem:K} as
\begin{equation} 
\label{eq:Pa_of_K_exact_opt} 
P_{\mathrm a}(\bar K) = \frac{\sqrt{\Sigma_{\bar K}}} {\sqrt{c_\epsilon}\lambda_{\mathrm a,\mathrm w}}. 
\end{equation}
Then, for a given \(K_{\min}=\bar K\), the candidate transmit power is defined as
\(P_{\bar K}=\min\big(P_{\max},P_{\mathrm a}(\bar K)\big)\) for
\(\bar K\in\{0,\ldots,M\}\). Thus, when performing a one-dimensional search for
the solution of \eqref{eq:master_opt}, it is sufficient to evaluate the
candidate boundary points \(\{P_{\bar K}\}_{\bar K=0}^{M}\). This piecewise
characterization eliminates the need to explore redundant regions of the search
space, making the resulting search more efficient and structure-aware than an exhaustive one-dimensional search. 
The overall piecewise search is summarized in Algorithm~\ref{alg:piecewise}.

\begin{algorithm}[t] 
\caption{Piecewise Search}
\label{alg:piecewise}
\ifCLASSOPTIONonecolumn\small\fi
\begin{algorithmic}[1] 
\renewcommand{\algorithmicrequire}{\textbf{Input:}}
\renewcommand{\algorithmicensure}{\textbf{Output:}}
\REQUIRE \(h_{\mathrm a,\mathrm b}\), \(\{h_{m,\mathrm b}\}_{m=1}^{M}\), \(\{\lambda_{m,\mathrm w}\}_{m=1}^{M}\), \(P_{\max}\), \(\epsilon\), \(\sigma_{\mathrm b}^{2}\)
\ENSURE \(P_{\mathrm a}^{*}\), \(K^{*}\), \(\tau^{*}\)
\STATE Sort users in ascending order of \(|h_{m,\mathrm b}|^{2}/\lambda_{m,\mathrm w}\), and denote the ordered indices by \(m_1,\ldots,m_M\).

\STATE Compute \(c_\epsilon\) and \(\{\Sigma_{\bar K}\}_{\bar K=0}^{M}\), and set $R_{\text{temp}} \leftarrow 0$.
\FOR{$\bar K=0$ to $M$}
  \STATE Compute \(P_{\bar K}\) from \eqref{eq:Pa_of_K_exact_opt} and \(R(P_{\bar K},\bar K)\) in \eqref{eq:Rate_OnOff}.
  \STATE If \(R(P_{\bar K},\bar K)>R_{\text{temp}}\), update \(R_{\text{temp}}\leftarrow R(P_{\bar K},\bar K)\), \(P_{\mathrm a}^{*}\leftarrow P_{\bar K}\), and \(K^{*}\leftarrow \bar K\).
\ENDFOR
\STATE Set \(\tau^{*}\leftarrow0\) if \(K^{*}=0\); \\
otherwise, \(\tau^{*}\leftarrow |h_{m_{K^{*}},\mathrm b}|^{2}/ \lambda_{m_{K^{*}},\mathrm w}\).
\STATE Return \(P_{\mathrm a}^{*}\), \(K^{*}\), \(\tau^{*}\).
\end{algorithmic}
\end{algorithm}

{\blue The computational complexity of Algorithm~\ref{alg:piecewise} can be explained
directly from its steps. First, computing the activation metric
\(|h_{m,\mathrm b}|^{2}/\lambda_{m,\mathrm w}\) for all \(M\) users requires
\(\mathcal O(M)\) operations, and sorting the users in ascending order of this
metric requires \(\mathcal O(M\log M)\) operations. Then, the main loop
evaluates one candidate transmit power \(P_{\bar K}\) and one corresponding
rate \(R(P_{\mathrm a},\bar K)\) for each \(\bar K=0,\ldots,M\), which requires
\(\mathcal O(M)\) candidate evaluations. Hence, the overall complexity of the
piecewise search is $ \mathcal O(M)+\mathcal O(M\log M)+\mathcal O(M) = \mathcal O(M\log M).$}
In contrast, a grid-based one-dimensional search over \(P_{\mathrm a}\) with
\(N_p\) grid points requires \(\mathcal O(N_pM)\) operations, because the
corresponding number of cooperative users and achievable rate must be evaluated for each
grid point. Moreover, its solution depends on the power-grid resolution and may
miss the analytical boundary candidate \(P_{\bar K}\). 
{\blue The proposed piecewise search instead evaluates all analytically relevant candidates \(P_{\bar K}\), thereby reducing the computational burden and avoiding power-discretization error.
Table~\ref{tab:grid_error_comparison} quantifies this finite-resolution effect.
The grid-based search selects slightly different values of \(P_{\mathrm a}^{\star}\) and \(K^\star\), which leads to a small rate loss, whereas Algorithm~\ref{alg:piecewise} directly evaluates the boundary candidates \(P_{\bar K}\) for all \(\bar K=0,\ldots,M\).}

\begin{table}[t]
\centering
\caption{ Optimization comparison between Algorithm~1 and the grid-based search with
$N_P=10^4$. The reported values are averaged over independent fading realizations, and
the rate gain is defined as $\left(R^\star-R^\star_{\mathrm{Grid}}\right)/R^\star_{\mathrm{Grid}}$.}
\label{tab:grid_error_comparison}
\renewcommand{\arraystretch}{1.12}
\scriptsize
\begin{tabular}{c c c c c c}
\hline
\(M\)
&
\(P_{\mathrm a}^\star/P_{\max}\)
&
\(P_{\mathrm a,\mathrm{Grid}}^\star/P_{\max}\)
&
\(K^\star\)
&
\(K_{\mathrm{Grid}}^\star\)
&
Rate gain \\
\hline
500  & 0.336 & 0.319 & 73.67  & 74.53  & 5.70\% \\
1000 & 0.407 & 0.386 & 103.29 & 104.15 & 5.94\% \\
2000 & 0.486 & 0.458 & 144.35 & 145.07 & 6.22\% \\
\hline
\end{tabular}
\end{table}

When imperfect user-to-Bob channel estimation is considered, the same piecewise structure can be applied by replacing \(h_{m,\mathrm b}\) in the activation metric with its estimate \(\hat h_{m,\mathrm b}\). Under the adopted channel estimation error model, \(\hat h_{m,\mathrm b}\) has the same statistical distribution as \(h_{m,\mathrm b}\), and hence Willie-side aggregate interference statistics remain unchanged for the same \((P_{\mathrm a},K)\). Therefore, the channel estimation error mainly affects the set of cooperative users $\mathcal{\hat S} \neq \mathcal{S}$ and the resulting achievable rate at Bob, while the Willie-side DEP retains the same performance.

\section{Numerical Results} \label{Sec:Simulation}

In this section, we numerically validate the theoretical analyses and examine the interactions among key system parameters in the proposed covert communication network with spatially distributed users.
To reflect more general wireless deployments, the numerical results are obtained under randomized spatial configurations rather than a fixed adverse topology.
For performance evaluations, we consider the large-scale fading channel model specified in the 3GPP TR 25.996 Urban Macrocell guideline~\cite{Ref_3GPP_TR25996}. The path loss (PL) at a distance $d$ (in meters) is modeled as
\begin{equation}
    \text{PL(dB)} = 34.5 + 35\log_{10}(d),
\end{equation}
which corresponds to the large-scale fading coefficient $\lambda_{x,y} = \beta_0 d_{x,y}^{-\alpha}$. Here, the path-loss exponent is set to $\alpha=3.5$, and the reference channel gain at $d=1\mathrm{m}$ is given by $\beta_0=10^{-34.5/10}$. The maximum transmit power of each user is set to $P_{\max} = 200\,\mathrm{mW}\,(23\,\mathrm{dBm})$, and the noise variances at Bob and Willie are assumed to be identical, i.e., $\sigma_{\mathrm b}^2 = \sigma_{\mathrm w}^2 = -102\,\mathrm{dBm}$.
The spatial topology of the network is constructed over a $1000\,\mathrm{m}\times1000\,\mathrm{m}$ square region. Willie is placed at the center of the area, $(500,500)$, while Bob is located at $(100,100)$. Alice is located at $(832.3,832.3)$, which gives $d_{\mathrm a,\mathrm w}\approx 470\,\mathrm{m}$ and $d_{\mathrm a,\mathrm b}\approx 1035.7\,\mathrm{m}$. This placement maintains a relatively unfavorable Alice-to-Bob link compared with the Alice-to-Willie link. The non-covert users are randomly distributed over the simulation area. This randomized topology allows us to examine the proposed scheme under spatially heterogeneous user deployments without restricting the evaluation to a specific geometry.

\begin{figure}[t]
\centering
\includegraphics[width=\figwidth]{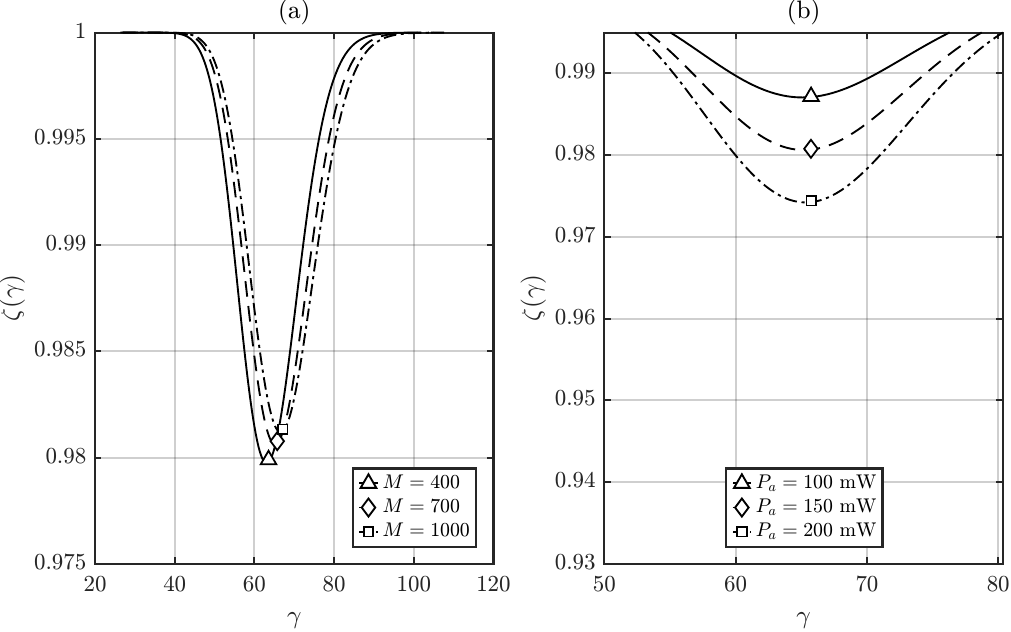}
\caption{DEP $\zeta(\gamma)$ as a function of the detection threshold
$\gamma$ under the general randomized topology. In (a), $K=100$ and
$P_{\mathrm a}=150\,\mathrm{mW}$ are fixed, while $M$ varies. In (b),
$M=700$ and $K=100$ are fixed, while $P_{\mathrm a}$ varies. The solid,
dashed, and dash-dot lines denote the numerical DEP, and the markers indicate the analytical
results of $\gamma^\star$ and $\zeta_{\min}$ obtained from Lemmas~2 and~3.}
\label{fig:DEP_gamma}
\end{figure}

In Fig.~\ref{fig:DEP_gamma}, we present the DEP curves obtained from numerical simulations under the general randomized topology. The line curves represent the DEP $\zeta(\gamma)$ as a function of Willie’s detection threshold $\gamma$, while the markers correspond to the theoretical results in Lemmas~\ref{lemma:gamma} and~\ref{lemma:Min_DEP}. Specifically, the $x$-coordinates of the markers represent the optimal detection threshold, $\gamma^\star$, derived in Lemma~\ref{lemma:gamma}, and the $y$-coordinates represent the minimum DEP, $\zeta_{\min}$, obtained from Lemma~\ref{lemma:Min_DEP}. In Fig.~\ref{fig:DEP_gamma}(a), we consider different numbers of non-covert users, i.e., $M=400$, $700$, and $1000$, while fixing $K=100$ and $P_{\mathrm a}=150\,\mathrm{mW}$. In Fig.~\ref{fig:DEP_gamma}(b), we consider different transmit powers of Alice, i.e., $P_{\mathrm a}=100$, $150$, and $200\,\mathrm{mW}$, while fixing $M=700$ and $K=100$.
As observed in Fig.~\ref{fig:DEP_gamma}, the theoretical results for $\gamma^\star$ and $\zeta_{\min}$ closely match the minima of the corresponding DEP curves for all considered system parameters. This close agreement validates Lemmas~\ref{lemma:DEP}--\ref{lemma:Min_DEP}.

It is also observed in Fig.~\ref{fig:DEP_gamma}(a) that increasing the number of non-covert users $M$ slightly increases the minimum DEP under the fixed values of $K$ and $P_{\mathrm a}$. This behavior can be explained by the order-statistics-induced cooperative user set. When $M$ increases while $K$ is fixed, Bob selects the $K$ cooperative users from a larger candidate pool according to the activation metric $r_m$. Hence, the selected cooperative users are more likely to provide effective Willie-side interference uncertainty relative to their Bob-side interference cost. As a result, the variance of the aggregate cooperative interference observed at Willie increases relative to the Alice-induced shift $\Delta=P_{\mathrm a}\lambda_{\mathrm a,\mathrm w}$. Since the minimum DEP is governed by the ratio between the aggregate interference uncertainty and the Alice-induced shift, the distributions of test statistics under $\mathcal H_0$ and $\mathcal H_1$ become more overlapped, which increases Willie’s DEP.
Meanwhile, Fig.~\ref{fig:DEP_gamma}(b) shows that increasing Alice’s transmit power from $P_{\mathrm a}=100\,\mathrm{mW}$ to $200\,\mathrm{mW}$ reduces the DEP. This occurs because a higher $P_{\mathrm a}$ increases the Alice-induced received-energy shift at Willie, making the two hypotheses more distinguishable and thereby improving Willie’s detection capability. Therefore, a larger Alice transmit power requires stronger cooperative interference uncertainty to maintain the same level of covertness.

\begin{figure}[t]
\centering
\includegraphics[width=\figwidth]{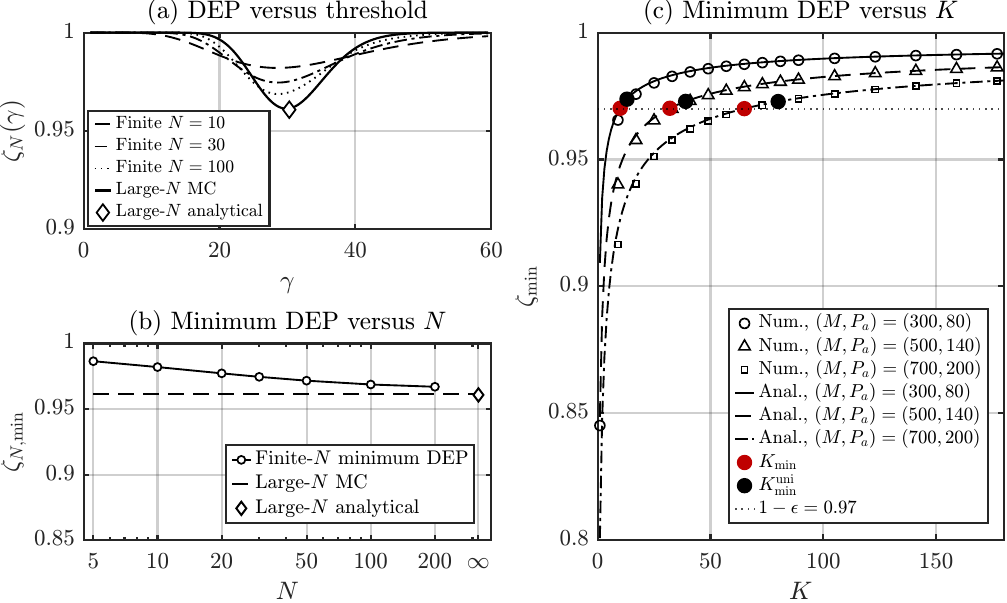}
\caption{Finite-$N$ DEP behavior and minimum DEP versus the number of cooperative users. In (a) and (b), $M=700$, $K=50$, and $P_{\mathrm a}=200\,\mathrm{mW}$ are fixed. In (c), $\zeta_{\min}$ is plotted as a function of $K$ for $(M,P_{\mathrm a})=(300,80\,\mathrm{mW})$, $(500,140\,\mathrm{mW})$, and $(700,200\,\mathrm{mW})$. The dotted horizontal line represents the covert constraint $1-\epsilon=0.97$.}
\label{fig:MinDEP_vs_K}
\end{figure}

In Fig.~\ref{fig:MinDEP_vs_K}, we further examine Willie’s detection performance and validate the theoretical results presented in Lemmas~\ref{lemma:DEP}, \ref{lemma:gamma}, \ref{lemma:Min_DEP}, and Theorem~\ref{Theorem:K}. In particular, Figs.~\ref{fig:MinDEP_vs_K}(a) and~(b) evaluate the finite-$N$ behavior of Willie’s energy detector, while Fig.~\ref{fig:MinDEP_vs_K}(c) validates the minimum DEP, $\zeta_{\min}$, and the minimum number of cooperative users, $K_{\min}$, under the proposed on--off scheme in Proposition~\ref{prop:power_profile}.
In Fig.~\ref{fig:MinDEP_vs_K}(a), we plot the DEP, $\zeta_N(\gamma)$, as a function of the detection threshold, $\gamma$, for different sample lengths, i.e., $N=10$, $30$, and $100$. The large-$N$ curve and the analytical marker are also included as benchmarks. As shown in the figure, the finite-$N$ DEP curves approach the large-$N$ benchmark as $N$ increases. This is because, for finite $N$, Willie’s received-energy statistic still contains finite-sample fluctuation, whereas in the large-$N$ limit, this fluctuation vanishes and the test statistic converges to the instantaneous aggregate received power. Therefore, the large-$N$ analysis provides a favorable benchmark for Willie’s detection capability. Fig.~\ref{fig:MinDEP_vs_K}(b) further confirms this observation by showing that the finite-$N$ minimum DEP, $\zeta_{N,\min}$, gradually converges to the large-$N$ result as $N$ increases.

In Fig.~\ref{fig:MinDEP_vs_K}(c), we plot the minimum DEP, $\zeta_{\min}$, as a function of the number of cooperative users, $K$, under the general randomized topology. The numerical results are determined by minimizing the DEP over Willie’s detection threshold, $\gamma$, for each value of $K$. The analytical curves correspond to the minimum DEP derived in Lemma~\ref{lemma:Min_DEP}. As observed in Fig.~\ref{fig:MinDEP_vs_K}(c), the analytical results closely match the numerical results for all considered parameter settings. This agreement confirms that Lemma~\ref{lemma:Min_DEP} accurately characterizes the minimum DEP and that Theorem~\ref{Theorem:K} serves as a reliable and computationally efficient guideline to determine the necessary number of cooperative users to meet a specified covert constraint.
{\blue Furthermore, the red and black markers in Fig.~\ref{fig:MinDEP_vs_K}(c) indicate $K_{\min}$ obtained from Theorem~\ref{Theorem:K} and $K_{\min}^{\mathrm{uni}}$ obtained from Corollary~\ref{Corollary:K_uniform}, respectively. The comparison between these markers shows that the minimum number of cooperative users depends not only on the total number of available users and Alice’s transmit power, but also on how the cooperative users are selected. In particular, Theorem~\ref{Theorem:K} accounts for the order-statistics-induced distribution of the set of cooperative users, whereas the uniform approximation neglects this selection bias. This explains the gap between $K_{\min}$ and $K_{\min}^{\mathrm{uni}}$.}
{\blue It is also observed in Fig.~\ref{fig:MinDEP_vs_K}(c) that $\zeta_{\min}$ increases with $K$ and gradually enters a stable region. As $K$ increases, $\Sigma_K$ increases, and hence Alice's energy shift becomes more difficult to distinguish from the cooperative interference fluctuation. Accordingly, Willie’s DEP increases. However, when $K$ is sufficiently large, i.e., $\Sigma_K\gg\Delta^2$, the remaining gap satisfies
\begin{equation}\label{eq:DEP_gap_asymp}
    1-\zeta_{\min}
    \approx
    \frac{\Delta}{\sqrt{2\pi\Sigma_K}}.
\end{equation}
Therefore, even though increasing $K$ enlarges $\Sigma_K$, the remaining 
gap $1-\zeta_{\min}$ decreases only with the inverse square root of 
$\Sigma_K$. This explains why $\zeta_{\min}$ first increases rapidly and 
then becomes nearly stable. Thus, after the covert constraint is 
satisfied, activating additional cooperative users only slightly increases 
$\zeta_{\min}$, while it continuously increases the interference imposed on Bob. }

\begin{figure}[t]
\centering
\includegraphics[width=\figwidth]{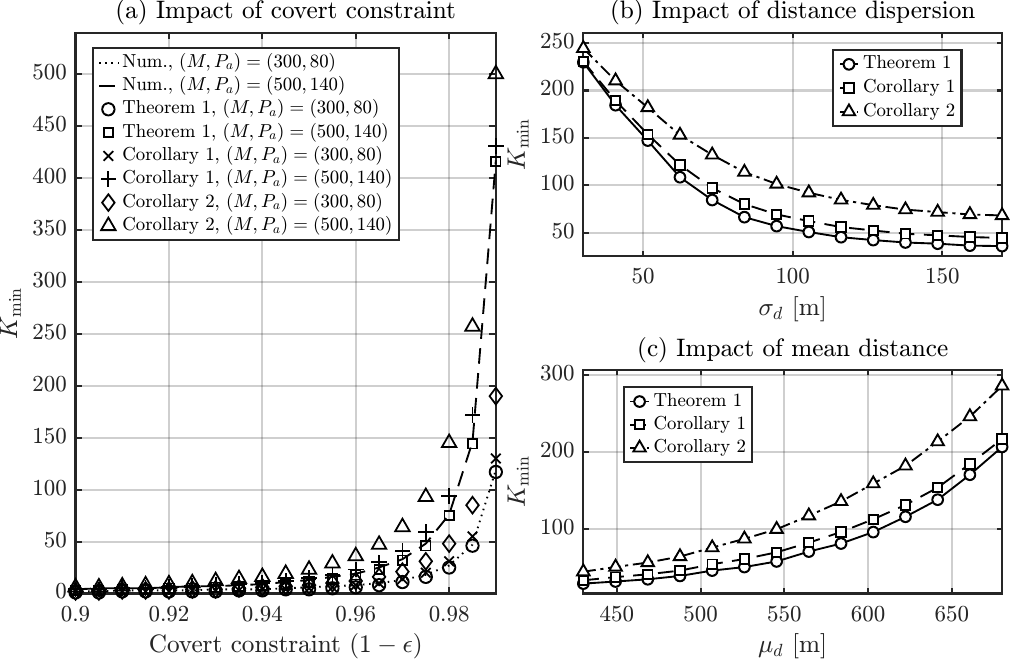}
\caption{Impact of the covert constraint and spatial parameters on the minimum number of cooperative users, $K_{\min}$. In (a), $K_{\min}$ is plotted versus the covert constraint $1-\epsilon$ for $(M,P_{\mathrm a})=(300,80\,\mathrm{mW})$ and $(500,140\,\mathrm{mW})$. In (b) and (c), $K_{\min}$ is plotted versus the standard deviation $\sigma_d$ and the mean $\mu_d$ of the user-to-Willie distances, respectively, for $M=500$, $P_{\mathrm a}=140\,\mathrm{mW}$, and $1-\epsilon=0.97$.}
\label{fig:epsilon_Kmin}
\end{figure}

In Fig.~\ref{fig:epsilon_Kmin}(a), we plot $K_{\min}$ as a function of the covert constraint, $1-\epsilon$, under the general randomized topology. The line curves represent the numerical values of $K_{\min}$ while the markers indicate the analytical results derived from Theorem~\ref{Theorem:K}, Corollary~\ref{Corollary:K_uniform}, and Corollary~\ref{cor:homogeneous}. Fig.~\ref{fig:epsilon_Kmin}(a) illustrates that the results of Theorem~\ref{Theorem:K} closely match the numerical results for both considered network sizes, i.e., $M=300$ and $M=500$. This confirms that the proposed analysis of $K_{\min}$ remains accurate not only in a large-user regime, but also for smaller and moderate user populations. In addition, as the covert constraint becomes more stringent, i.e., as $1-\epsilon$ increases, $K_{\min}$ increases. This is because a stricter covert constraint requires larger aggregate cooperative interference uncertainty at Willie.
The results of Corollary~\ref{Corollary:K_uniform} follow the same trend as those of Theorem~\ref{Theorem:K}, which confirms that the uniform finite-population approximation provides a useful simplified insight. However, the results of Corollary~\ref{cor:homogeneous} are noticeably larger than those of Theorem~\ref{Theorem:K}. This implies that the homogeneous approximation can overestimate the minimum number of cooperative users when the actual user deployment is spatially heterogeneous. 

Figs.~\ref{fig:epsilon_Kmin}(b) and (c) further illustrate how the spatial distribution of non-covert users affects $K_{\min}$. In Fig.~\ref{fig:epsilon_Kmin}(b), the standard deviation of the user-to-Willie distances, $\sigma_d$, is varied. It is observed that as $\sigma_d$ increases, the minimum number of cooperative users decreases. This is because a larger distance dispersion creates a more heterogeneous set of user-to-Willie links, which increases the chance that selected cooperative users provide stronger interference uncertainty at Willie. As a result, fewer cooperative users are required to satisfy the same covert constraint.
In Fig.~\ref{fig:epsilon_Kmin}(c), the mean user-to-Willie distance, $\mu_d$, is varied. It is observed that as $\mu_d$ increases, $K_{\min}$ also increases. This is because the users become farther from Willie on average, and the aggregate interference power observed at Willie becomes weaker. Hence, more cooperative users are required to generate sufficient interference uncertainty and satisfy the covert constraint. Combining the results from Figs. \ref{fig:epsilon_Kmin} (a) and (b) yields a comprehensive design guideline: to maximize covert communication performance, non-covert users should be deployed as close to Willie as possible while maintaining a high degree of spatial dispersion.

\begin{figure}[t]
\centering
\includegraphics[width=\figwidth]{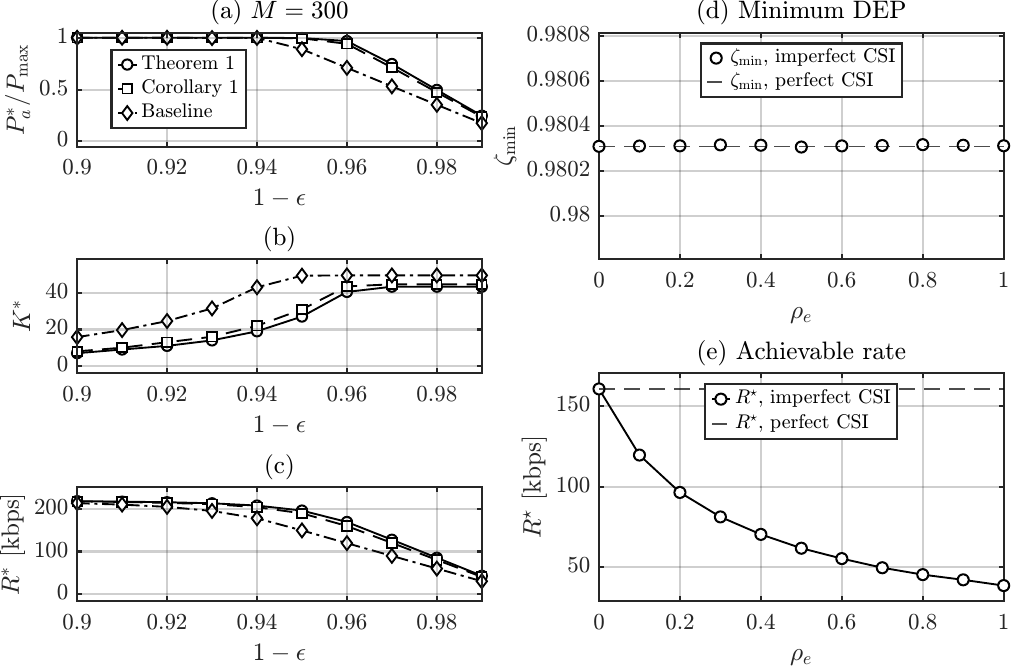}
\caption{Optimized system performance and impact of imperfect user-to-Bob channel estimation. In (a)--(c), $P_{\mathrm a}^{\star}/P_{\max}$, $K^{\star}$, and $R^{\star}$ are plotted versus the covert constraint $1-\epsilon$ for $M=300$. In (d), Willie’s minimum DEP is plotted versus the channel estimation error level $\rho_e$ for $M=700$, $K=100$, and $P_{\mathrm a}=150\,\mathrm{mW}$. In (e), the optimized achievable rate is plotted versus $\rho_e$ for $M=700$.}
\label{fig:optimal}
\end{figure}

In Fig.~\ref{fig:optimal}, we compare the covert communication performance of the proposed design with those of the uniform-approximation design and the Bob-only baseline. In the proposed design, cooperative users are selected based on Proposition~\ref{prop:power_profile}. 
{\blue In contrast, the Bob-only baseline selects cooperative users solely according to the instantaneous channel gains between the users and Bob~\cite{Lee23Multi,Lee24Channel,Yeom24Covert}. 
Importantly, both the proposed method and the baseline method are evaluated under the same heterogeneous network realization. The baseline differs from the proposed method only in the cooperative user selection rule. It selects cooperative users based on the instantaneous user-to-Bob channel gains only, without using the user-to-Willie large-scale fading coefficients in the selection metric. After the baseline user set is determined, its covertness and achievable rate are evaluated using the actual heterogeneous user-to-Willie large-scale fading coefficients. }
To validate the effectiveness of the proposed design, we compare the optimized results, i.e., $P_{\mathrm a}^{\star}$, $K^{\star}$, and $R^{\star}$, with respect to the stringency of the covert constraint, $1-\epsilon$.

In Figs.~\ref{fig:optimal}(a) and~(b), we present the ratio of the optimized transmit power for the covert message to the maximum transmit power, $P_{\mathrm a}^{\star}/P_{\max}$, and the optimal number of cooperative users, $K^{\star}$, respectively. It is observed that all schemes adjust $P_{\mathrm a}^{\star}$ and $K^{\star}$ according to the stringency of the covert constraint. However, the proposed design requires fewer cooperative users than the Bob-only baseline for a given covert constraint. The proposed design exploits this spatial heterogeneity by selectively utilizing users that can effectively increase Willie’s uncertainty without excessively increasing the interference at Bob. In contrast, the Bob-only baseline does not exploit the user-to-Willie large-scale fading coefficients in the selection metric, and therefore cannot fully utilize the spatially heterogeneous interference structure. 
In addition, Fig.~\ref{fig:optimal}(c) demonstrates that the proposed design improves the optimized covert rate compared with the Bob-only baseline. In particular, as the covert constraint becomes more stringent, the performance gap between the proposed design and the baseline becomes more pronounced. This indicates that incorporating the distinct spatial locations of users into the cooperative user selection and optimization process substantially enhances covert performance. The performance of Corollary~\ref{Corollary:K_uniform} is close to that of Theorem~\ref{Theorem:K}, which confirms that the uniform approximation provides a useful simplified design. 

Fig.~\ref{fig:optimal}(d) shows the impact of imperfect user-to-Bob channel estimation on Willie’s minimum DEP at $M=700$, $K=100$, and $P_{\mathrm a}=150\,\mathrm{mW}$. {\blue Under the adopted error model,
$\hat h_{m,\mathrm b}=\sqrt{1-\rho_e}h_{m,\mathrm b}+\sqrt{\rho_e}u_{m,\mathrm b}$,
where $u_{m,\mathrm b}$ has the same distribution as $h_{m,\mathrm b}$. Thus, changing $\rho_e$ changes how accurately $\hat h_{m,\mathrm b}$ represents the true channel $h_{m,\mathrm b}$, but does not change the statistical distribution of $\hat h_{m,\mathrm b}$ itself. Therefore, for the same $(P_{\mathrm a},K)$, the Willie-side aggregate interference statistics remain unchanged, and the minimum DEP overlaps with the perfect estimation reference. }
Furthermore, Fig.~\ref{fig:optimal}(e) shows the impact of imperfect user-to-Bob channel estimation on the optimized achievable rate. {\blue Bob selects cooperative users and determines $K^{\star}$, the activation threshold, and $P_{\mathrm a}^{\star}$ using the estimated channels $\hat h_{m,\mathrm b}$. When the estimation error increases, users selected as weak interferers to Bob based on $\hat h_{m,\mathrm b}$ can have large true channel gains $h_{m,\mathrm b}$. Therefore, the actual aggregate interference at Bob increases, thereby reducing $R^{\star}$.}

\section{Conclusion}\label{Sec:conclusion}

This paper investigated covert communication in which a single covert user is assisted by multiple spatially distributed non-covert users against a geometry-aware Willie. Under the adopted large-$M$ analytical formulation, we established an on--off cooperative-user activation structure and characterized Willie’s DEP, the minimum number of cooperative users, and the corresponding activation threshold. We further developed a low-complexity piecewise search algorithm for the analytically reformulated power-control problem. The results showed that greater Willie-side spatial dispersion can reduce the minimum number of cooperative users, whereas a larger average user-to-Willie distance increases it, and that homogeneous or averaged-geometry analyses may mischaracterize the cooperation requirement. The numerical results further confirmed that the finite-sample DEP converges to the large-sample benchmark with an increasing number of observations, and that the impact of imperfect user-to-Bob channel estimation appears primarily as a rate loss induced by cooperative user activation errors, rather than as a loss of covertness.
{\blue Future work will extend the proposed framework to mobile networks, multiple-Willie scenarios, and integrated sensing and communication systems, where sensing resources and spatial information can be jointly exploited to enhance covertness.}

\appendices

\section{Proof of Proposition \ref{prop:power_profile}} \label{Appendix:power_profile}

For a fixed $P_a$, \eqref{eq:original_opt} becomes
\begin{subequations}\label{eq:powerprofile_Opt}
\begin{align}
  \bar{P}^* & = \argmax_{\bar{P}} \frac{P_\mathrm a |h_\mathrm{a,b}|^2}{\sum_{m=1}^M P_m|h_{m,\mathrm b}|^2 + \sigma_\mathrm b^2} \\
  & \text{s.t.} ~~ \eqref{eq:CovertConst} ~ \text{and} ~ \eqref{eq:PowConst_Noncovert}.\nonumber
\end{align}
\end{subequations}

To characterize the covert constraint, we approximate the sum of Gamma random variables by a Gamma distribution with matched first and second moments~\cite{Mathai93AHandbook,Moschopoulos85TheDistribution,Covo14AnovelSingleGammaApprox}.  Thus, in large $M$, $T_\mathrm w$ can be reformulated to 
\begin{equation} \label{eq:pdfTest}
T_\mathrm w - \sigma_\mathrm w^2 \sim \begin{cases}
    \Gamma\!\left( \frac{\psi^2}{\Psi},\,\frac{\Psi}{\psi} \right), & \mathcal{H}_0, \\[3pt]
    \Gamma\!\left( \frac{(\psi + \lambda_\mathrm{a,w}P_\mathrm a)^2}{\Psi + (\lambda_\mathrm{a,w} P_\mathrm a)^2},\,\frac{\Psi + (\lambda_\mathrm{a,w} P_\mathrm a)^2}{\psi + \lambda_\mathrm{a,w}P_\mathrm a} \right), & \mathcal{H}_1,
  \end{cases}
\end{equation}
where $\psi = \sum_{m=1}^M \lambda_{m,\mathrm w}P_m$ and $\Psi = \sum_{m=1}^M \lambda_{m,\mathrm w}^2P_m^2$.
For sufficiently large $M$, this Gamma distribution is approximated as a normal distribution~\cite[Sec. 5.5]{Casella02StatInference}, \cite[Sec. 2.8]{Ross19Probability}, yielding
\begin{equation} \label{eq:ProbDist_Tw}
    T_\mathrm w - \sigma_\mathrm w^2 \sim \begin{cases}
      \mathcal{N}\!\left(\psi,\,\Psi \right), & \mathcal{H}_0, \\[3pt]
      \mathcal{N}\!\left(\psi + \lambda_{a,w}P_\mathrm a,\, \Psi + \lambda^2_\mathrm{a,w}P^2_\mathrm a \right), & \mathcal{H}_1.
\end{cases}
\end{equation}
Using \eqref{eq:ProbDist_Tw}, the DEP is given by
\begin{align}
\zeta &= \Pr(T_\mathrm w > \gamma \mid \mathcal{H}_0) + \Pr(T_\mathrm w \leq \gamma \mid \mathcal{H}_1) \nonumber \\
&= \underbrace{Q\!\left(\frac{\gamma-\psi}{\sqrt{\Psi}}\right)}_{(a)} 
   + \underbrace{Q\!\left(\frac{(\psi + \lambda_\mathrm{a,w}P_\mathrm a) - \gamma}{\sqrt{\Psi + \lambda_\mathrm{a,w}^2P_\mathrm a^2}}\right)}_{(b)}, \label{eq:DEP_Gaussian}
\end{align}
where $Q(\cdot)$ denotes the standard Gaussian $Q$-function. Setting $P_{\rm FA}=\alpha$, the detection threshold is obtained from term (a) as
\begin{equation}
    \gamma = \sqrt{\Psi}\,Q^{-1}(\alpha) + \psi. \label{eq:FAwithAlpha}
\end{equation}

Substituting \eqref{eq:FAwithAlpha} into term (b) yields \(P_{\rm MD}\), and thus
\begin{equation}\label{eq:DEP_Pi}
\zeta = \alpha + Q\!\left(\frac{\lambda_\mathrm{a,w}P_\mathrm a - \sqrt{\Psi}\,Q^{-1}(\alpha)}{\sqrt{\Psi + \lambda_\mathrm{a,w}^2P_\mathrm a^2}}\right).
\end{equation}
From \eqref{eq:DEP_Pi}, $\zeta(\alpha,\Psi)$ is monotonically increasing in $\Psi$ for any fixed $\alpha$. Hence, the minimized DEP $\zeta_{\min}(\Psi)=\min_\alpha \zeta(\alpha,\Psi)$ is also monotonically increasing in $\Psi$.
Suppose that for two values $\Psi < \hat\Psi$, there exist optimal thresholds $\alpha^*$ and $\hat\alpha$ minimizing $\zeta(\alpha,\Psi)$ and $\zeta(\alpha,\hat\Psi)$, respectively. 

Then, by definition, $\zeta_{\min}(\Psi) = \zeta(\alpha^*,\Psi) \leq \zeta(\hat\alpha,\Psi)$, and because $\zeta$ increases with $\Psi$ for any fixed $\alpha$, $\zeta(\hat\alpha,\Psi) < \zeta(\hat\alpha,\hat\Psi) = \zeta_{\min}(\hat\Psi)$.
Therefore, $\zeta_{\min}(\Psi)$ is strictly increasing with respect to $\Psi$. By applying Sedrakyan’s Lemma~\cite{Sedrakyan97AboutTheApplications}, we obtain
\begin{equation} \label{eq:Cauchy}
  \frac{\Psi}{M}
  = \frac{1}{M}\sum_{m=1}^M \!\left(\lambda_{m,\mathrm w}P_m\right)^2
  \ge \frac{1}{M^2}\!\left(\sum_{m=1}^M \lambda_{m,\mathrm w}P_m\right)^2,
\end{equation}
where the inequality becomes tight as $M$ grows large. Hence, for sufficiently large $M$, we have $\Psi \approx \psi^2/M$, and $\zeta_{\min}$ becomes a monotonically increasing function of $\psi^2/M$. Therefore, the covert constraint becomes
\begin{equation}\label{eq:convert_DEP_Pi}
  \psi = \sum_{m=1}^M \lambda_{m,\mathrm w}P_m \geq \delta,
\end{equation}
where $\delta \triangleq \sqrt{M}\,\zeta_{\min}^{-1}(1-\epsilon)$.

The optimization problem in \eqref{eq:powerprofile_Opt} can be reformulated as
\begin{subequations}\label{eq:powerprofile_Opt_reformul}
\begin{align}
  \bar{P}^* = \arg\min_{\bar{P}} & \sum_{m=1}^M P_m|h_{m,\mathrm b}|^2 + \sigma_\mathrm b^2,  \label{eq:powerprofile_Opt_reformul_obj}\\
  & \text{s.t.} ~~ \sum_{m=1}^M \lambda_{m,\mathrm w}P_m \ge \delta ~~ \text{and} ~~ \eqref{eq:PowConst_Noncovert}     \label{eq:CovertConst_IntfVer}
\end{align}
\end{subequations}
The KKT conditions with multipliers $\{\phi_m,\omega_m,\nu\}$ are
\begin{itemize}
  \item \textbf{Stationarity:} 
  $|h_{m,b}|^2 - \nu^*\lambda_{m,w} - \phi_m^* + \omega_m^* = 0$, $\forall m$.
  \item \textbf{Complementary slackness:} 
  $\phi_m^* P_m^* = 0$, 
  $\omega_m^*(P_m^* - P_{\max}) = 0$, and 
  $\nu^* \left(\delta-\sum_{m=1}^{M} \lambda_{m\mathrm w}P_m^*\right) = 0$.
  \item \textbf{Feasibility:} 
  $0 \leq P_m^* \leq P_{\max}$, 
  $\sum_{m=1}^M \lambda_{m,\mathrm w}P_m^* \ge \delta$, and 
  $\phi_m^*, \omega_m^*, \nu^* \ge 0$.
\end{itemize}
Solving these conditions for each $m$ yields three feasible cases according to the comparison between $|h_{m,\mathrm b}|^2/\lambda_{m,\mathrm w}$ and $\nu^*$.

Thus, the optimal transmit power for each non-covert user is expressed as
\begin{equation}
    P_m^* =
\begin{cases}
      P_{\max}, & \frac{|h_{m,\mathrm b}|^2}{\lambda_{m,\mathrm w}} < \tau, \\[3pt]
      0, & \frac{|h_{m,\mathrm b}|^2}{\lambda_{m,\mathrm w}} > \tau, \\[3pt]
      \displaystyle \frac{\delta - \sum_{ m \in \mathcal{C} } P_{\max}}{\lambda_{m,\mathrm w}}, & \frac{|h_{m,\mathrm b}|^2}{\lambda_{m,\mathrm w}} = \tau,
\end{cases} \label{eq:OptPowProfile}
\end{equation}
where $\mathcal{C}$ denotes the set of selected cooperative users, i.e., $\mathcal{C} = \{m': |h_{m',\mathrm b}|^2 / \lambda_{m',\mathrm w} < \tau \}$ and $\tau = \nu^*$. 
Note that if $U_m$ is a non-covert user satisfying $|h_{m,\mathrm b}|^2 / \lambda_{m,\mathrm w} = \tau$,  the user cannot determine the transmit power from \eqref{eq:OptPowProfile} since $U_m$ cannot have the knowledge of the set $\mathcal{C}$, i.e., the channels between the other users and Bob. 

Consequently, absorbing the boundary case into the activated set, the optimal power profile reduces to the on--off scheme in \eqref{eq:onoff_prop}.
\qed

\section{Proof of Lemma \ref{lemma:DEP}}\label{Appendix:DEP}

By leveraging the moment-matching based Gamma approximation technique introduced in \cite{Mathai93AHandbook, Moschopoulos85TheDistribution, Covo14AnovelSingleGammaApprox}, the PDF of $T_\mathrm w - \sigma_\mathrm w^2$ in \eqref{eq:test_statistic_RV} can be approximated as
\begin{equation} \label{eq:test_statistic_RV_GammaApprox}
   T_\mathrm w - \sigma_\mathrm w^2  \\[-2pt]
      \sim 
     \begin{cases}
      \displaystyle \Gamma\!\left( \frac{\theta^2}{\Theta},\,\frac{\Theta}{\theta} \right), 
      \hfill \mathcal{H}_0, \\
      \displaystyle \Gamma\!\left( 
        \frac{\theta^2}{\Theta},
        \frac{\Theta }{\theta}\right) \ast \Gamma\left(1, \lambda_\mathrm{a,w}P_\mathrm a
      \right), 
      \hfill \mathcal{H}_1,
     \end{cases} 
\end{equation}
where $\theta = \sum_{i=1}^K \lambda_{m_i,\mathrm w}P_{\max}$ and $\Theta = \sum_{i=1}^K \lambda_{m_i,\mathrm w}^2P_{\max}^2$. {\blue Recall that a Gamma distribution converges to a Gaussian distribution as its shape parameter increases \cite[Sec. 5.5]{Casella02StatInference}, \cite[Sec. 2.8]{Ross19Probability}. In our scenario, the sufficiently large number of cooperative users, $K$, results in a large shape parameter \eqref{eq:test_statistic_RV_GammaApprox}.} 

Consequently, conditioned on the selected set $\mathcal{S}$, the equation of \eqref{eq:test_statistic_RV_GammaApprox} can be approximated to
\begin{equation} \label{eq:test_statistic_RV_CLTApprox_fixedSet}
   T_\mathrm w - \sigma_\mathrm w^2 \mid \mathcal{S} \\[-2pt]
      \sim 
     \begin{cases}
      \displaystyle \mathcal{N} \left( \theta_\mathcal{S}, \Theta_\mathcal{S} \right), 
      \hfill \mathcal{H}_0, \\
      \displaystyle \mathcal{N} \left( \theta_\mathcal{S}, \Theta_\mathcal{S} \right) \ast \Gamma\left(1, \lambda_\mathrm{a,w}P_\mathrm a
      \right), 
      \hfill \mathcal{H}_1,
     \end{cases} 
\end{equation}
where $\theta_{\mathcal{S}}=\sum_{m\in\mathcal{S}}\lambda_{m,\mathrm w}P_{\max}$ and $\Theta_{\mathcal{S}}=\sum_{m\in\mathcal{S}}\lambda_{m,\mathrm w}^{2}P_{\max}^{2}$. 
As $\mathcal{S}$ is randomly drawn and unknown to Willie, the overall statistics of $T_\mathrm w-\sigma_\mathrm w^2$ must incorporate this selection induced randomness. Equivalently, the mean and variance must be computed by averaging the conditional moments with respect to the random user selection set $\mathcal{S}$.

{\blue By the law of total expectation and variance~\cite[Sec. 4.4]{Casella02StatInference}, \cite[Ch. 3]{Ross19Probability},
the overall mean and variance of the aggregate cooperative interference are obtained by averaging over the order-statistics-induced selected user sequence. 
Since \(r_m=|h_{m,\mathrm b}|^2/\lambda_{m,\mathrm w}\) follows an exponential distribution with rate \(\alpha_m=\lambda_{m,\mathrm w}/\lambda_{m,\mathrm b}\), the selected cooperative users are not, in general, uniformly distributed over all \(K\)-user subsets. 
Rather, their selection is induced by the order-statistics of \(\{r_m\}_{m=1}^{M}\). 
For an ordered selected sequence \((m_1,\ldots,m_K)\), where all indices are distinct, the probability that these users are selected as the first \(K\) users in this order is given by
\begin{equation}
    p(m_1,\ldots,m_K)
    =
    \prod_{\ell=1}^{K}
    \frac{\alpha_{m_\ell}}
    {\sum_{j\notin\{m_1,\ldots,m_{\ell-1}\}}\alpha_j},
    \label{eq:ordered_selection_probability}
\end{equation}
where \(\{m_1,\ldots,m_{\ell-1}\}\) is empty for \(\ell=1\).
Then, the first-order inclusion probability of user \(m\) is defined as
\begin{align}
    \pi_m^{(K)}
    &\triangleq
    \sum_{\substack{(m_1,\ldots,m_K):\\
    m\in\{m_1,\ldots,m_K\}}}
    p(m_1,\ldots,m_K),
    \label{eq:first_order_inclusion_probability}
\end{align}
and the second-order inclusion probability of users \(m\) and \(n\), \(m\neq n\), is defined as
\begin{align}
    \pi_{m,n}^{(K)}
    &\triangleq
    \sum_{\substack{(m_1,\ldots,m_K):\\
    m,n\in\{m_1,\ldots,m_K\}}}
    p(m_1,\ldots,m_K).
    \label{eq:second_order_inclusion_probability}
\end{align}
In \eqref{eq:first_order_inclusion_probability} and
\eqref{eq:second_order_inclusion_probability}, the summations are taken over all ordered selected sequences \((m_1,\ldots,m_K)\) with distinct indices.

Then, \(\Xi_K\triangleq P_{\max}\sum_{m=1}^{M}\pi_m^{(K)}\lambda_{m,\mathrm w}\) as stated in Lemma~\ref{lemma:DEP}, and
\(\Sigma_K\triangleq P_{\max}^{2}\left(\mathbb E[A_K]+\mathrm{Var}(L_K)\right)\),
where \( L_K\triangleq \sum_{i=1}^{K}\lambda_{m_i,\mathrm w}\) and \( A_K\triangleq \sum_{i=1}^{K}\lambda_{m_i,\mathrm w}^{2}\).
Expanding \(\mathbb E[A_K]\) and \(\mathrm{Var}(L_K)\) with the inclusion probabilities \(\pi_m^{(K)}\) and \(\pi_{m,n}^{(K)}\) in \eqref{eq:first_order_inclusion_probability} and \eqref{eq:second_order_inclusion_probability} yields exactly \eqref{eq:Sigma_K_selection}.}

Accordingly, the false-alarm probability, $P_{FA}$, is defined as

\begin{equation}\label{eq:PFA_Qform}
P_{\mathrm{FA}}
= \Pr\!\left(T_\mathrm w-\sigma_{\mathrm w}^2>\hat{\gamma}\,\middle|\,\mathcal{H}_0\right)
\approx
Q\!\left(
\frac{\hat{\gamma}-\Xi_K}{\sqrt{\Sigma_K}}
\right),
\end{equation}
where \(\hat{\gamma}\triangleq\gamma-\sigma_{\mathrm w}^2\). In addition, the miss-detection probability at Willie can be derived as
\begin{align}\label{eq:PMD_Qform}
P_{\mathrm{MD}}
&=
\Pr\!\left(T_\mathrm w-\sigma_{\mathrm w}^2\le\hat{\gamma}\,\middle|\,\mathcal{H}_1\right)
=
\Pr(X+Y\le\hat{\gamma})
\nonumber\\
&=
\int_{-\infty}^{\hat{\gamma}}
f_Y(y)
\left[
1-\exp\!\left(
-\frac{\hat{\gamma}-y}{\Delta}
\right)
\right]dy
\nonumber\\
&=
Q\!\left(
-\frac{\hat{\gamma}-\Xi_K}{\sqrt{\Sigma_K}}
\right)
-
\exp\!\left(a_\gamma\right) Q\!\left(b_\gamma\right),
\end{align}
where \(X\sim\Gamma(1,\Delta)\) with \(\Delta\triangleq P_{\mathrm a}\lambda_{\mathrm a,\mathrm w}\), and \(f_Y(y)\) denotes the PDF of \(Y\sim\mathcal{N}(\Xi_K,\Sigma_K)\).
Define
\begin{equation*}
a_\gamma \triangleq -\frac{2\Delta(\hat{\gamma}-\Xi_K)-\Sigma_K}{2\Delta^2},
\qquad
b_\gamma \triangleq -\frac{\Delta(\hat{\gamma}-\Xi_K)-\Sigma_K}{\sqrt{\Sigma_K}\,\Delta}.
\end{equation*}
Combining \eqref{eq:PMD_Qform} with \eqref{eq:PFA_Qform} gives
\(\zeta(\gamma)=P_{\mathrm{FA}}+P_{\mathrm{MD}}=1-\exp(a_\gamma)\,Q(b_\gamma)\),
which matches \eqref{eq:DEP_general}.

\bibliographystyle{IEEEtran}
\bibliography{ref_covert}

\end{document}